\title{A Polynomial Time Algorithm for Minimax-Regret Evacuation on a Dynamic Path}
\author{Guru Prakash Arumugam\inst{1} John Augustine\inst{1}\\ Mordecai Golin\inst{2}  Prashanth Srikanthan\inst{1}}
\institute{Indian Institute of Technology Madras
\and Hong Kong University of Science and Technology}
\begin{document}
\maketitle

\begin{abstract}
  A {\em dynamic path network} is an undirected path with evacuees situated at each vertex. To evacuate the path, evacuees travel towards a designated sink (doorway) to exit.  Each edge has a {\em capacity}, the number of evacuees that can enter the edge in unit time.  Congestion occurs if an evacuee has to wait at a vertex  for other evacuees to leave first.

The basic problem is to 
place $k$ sinks on the line, with an associated evacuation strategy, so as to minimize the total time needed to evacuate everyone.

The {\em minmax-regret} version  introduces uncertainty into the input, with the number of evacuees at vertices only being specified to within a range. The problem is to find a universal solution whose regret (difference from optimal for a given input) is minimized over all legal inputs.

The previously best known algorithms for the minmax-regret version problem ran in time exponential in $k$. In this paper, we derive new properties of solutions that yield  the first polynomial time algorithms for solving the problem.

\end{abstract}

% Introduction part including overview of the paper, past work and motivation
\section{Introduction}
\label{Sec:Intro}
% Section 1 - Introduction (Includes overview of the paper, past work and motivation)
%Last Updated 2014/04/18 7:05PM

%\cite{conf/cocoon/BhattacharyaK12}
%\cite{Higashikawaa}
%\cite{Mamada2006}
%\cite{Hoppe2000b}
%\cite{Wang2013}
%\cite{Ni2014a}
%\cite{Li2014}
%\cite{Ford1958a}
%\cite{Hamacher2002a}
%\cite{Kouvelis1997}
%\cite{Higashikawa2014}
%\cite{ChengHKNSX13}
%\cite{Brodal:2008:OVA:2283963.2284593}

{\em Dynamic flow networks} model movement of items on a graph.  The process starts with  each vertex $v$ assigned  some initial set of supplies $w_v$.  Supplies flow across edges from one vertex to another. Each edge $e$ has  a  given capacity $c_e$ which  limits the rate of the flow of supplies into the edge in each time unit.  If all edges have the same capacity $c_e=c$ the network has {\em uniform capacity}. Each $e$ also has a time required to travel the edge.  Note that as supplies move around the graph,
 {\em congestion} can occur, as supplies back up at a vertex.

Dynamic flow networks  were introduced %by Ford and Fulkerson 
in \cite{Ford1958a}   and have since been extensively analyzed.  
One well studied problem on such networks is the transshipment  problem, e.g., \cite{Hoppe2000b}, in which  the graph has several sources and sinks, with the original supplies being the sources and each sink having a specified demand.  The problem is then to find the minimum time required to satisfy all of the demands.

Dynamic Flow problems also model \cite{Higashikawa2014} {\em evacuation problems}.  In these, the vertex supplies are people in a building(s) and the problem is to find a routing strategy (evacuation plan) that evacuates all of them to specified sinks (exits)  in minimum time. 
Note that in these problems the evacuation plan is vertex based.  That is, each vertex has one associated evacuation edge; all people starting or arriving at that vertex must evacuate through that edge. After traversing the edge they arrive at another vertex and traverse its evacuation edge.  This continues until a sink is reached and the people exit. The basic optimization problem is to determine  a plan that minimizes the total time needed to evacuate all the people.

 In some versions of the problem the sinks are known in advance.  In others, such as the ones we will address in this paper, the placement of the sink(s) is part of the evacuation plan with only $k$,  the number of allowed sinks, being specified as part of the problem. To the best of our knowledge there is no known polynomial time algorithm for solving this problem on a general graph. \cite{Mamada2006} gives an $O(n \log^2 n)$ algorithm for solving the
 $1$-sink problem on a dynamic {\em tree} network with general capacities. \cite{Higashikawaa} improves this down to to $O(n \log n)$ when the capacities are uniform. \cite{Higashikawa2014} shows how to solve the $k$-sink problem on a uniform capacity dynamic {\em path network} in $O(k n \log n)$ time.

In practice, the {\em exact} input, e.g.,  number of people $w_v$ at  each vertex $v$, is unknown at the time the plan is drawn up. All that may be known is that $w_v \in [w_v^-, w_v^+]$  for some specified {\em range}.

 One model for attacking this type of uncertainty is to define the {\em regret} of a plan on a particular fixed input as the discrepancy between the evacuation time for that plan on that  input and the minimum time needed to evacuate for that input. The maximum regret of the plan is then taken over all possible inputs.  The {\em minmax-regret} plan is the one that minimizes the maximum regret.
 
 Minmax regret optimization has been extensively studied for the $k$-median (\cite{conf/cocoon/BhattacharyaK12} is a recent case) and many other optimization problems (\cite{Kouvelis1997} provides an introduction to the literature).  The $1$-sink minmax-regret evacuation problem on a uniform capacity path was originally solved in 
  $O(n \log^2 n)$ time by  \cite{ChengHKNSX13}. 
This was reduced down to $O(n \log n)$ by \cite{Wang2013}.  \cite{Higashikawa2014} provides an $O(n \log^2 n)$ algorithm for the $1$-sink minmax-regret problem on a uniform capacity tree.

Returning to the minimax-regret problem uniform capacity {\em path} case, the only algorithms known for $k > 1$ were \cite{Li2014} which for $k=2$ gave  an $O(n^3 \log n)$ algorithm  and \cite{Ni2014a} which gave 
an $O\left(n^{1+k} (\log n)^{1 + \log k}\right)$ algorithm for general $k$.

In this paper we derive many new properties of the the minmax-regret  uniform capacity path problem.
These lead to two new algorithms, both of which improve on the previous ones. The first, which is better for small $k$, runs in  $O(k n^2 (\log n)^k)$. The second, better for larger $k$, runs in   $O(k n^3 \log n)$, which is the first polynomial time algorithm for this problem.

 The paper is structured as follows.  
 Section \ref{Sec:Defs} introduces the setting and definitions.
 Section \ref{Sec:Basic} discusses how to solve the optimal non-regret problem (whose solution will be needed later).
 Section \ref{Sec:MinMax} derives many of the minmax regret properties needed.
 Section \ref{Sec:Alg1} gives our first algorithm, good for small $k$.
 Section \ref{Sec:Rec} derives more properties.
 Section \ref{Sec:Alg2} gives our second, polynomial time, algorithm.

% Includes definitions of path, capacity, weight, scenario, etc. Also the definitions for Optimal k-Sink Location and Minimax-Regret k-Sink Location problems.
\section{Preliminary Definitions}
\label{Sec:Defs}
% Section 2 - Preliminary Definitions (Includes definitions of path, capacity, weight, scenario. Also the definitions for Minimax k-Sink Location and Minimax-Regret k-Sink Location problems.)
%Last Updated 2014/04/18 7:05PM

\newcommand*{\Scale}[2][4]{\scalebox{#1}{$#2$}}%

\subsection{Model Definition}
Consider a path $P=(V,E)$ with $(n+1)$ vertices(buildings)
$V=\{x_{0},x_{1},...,x_{n}\}$ and $n$ edges(roads) $E=\{e_{1},e_{2},...,e_{n}\}$ with each
edge $e_{i}=(x_{i-1},x_{i})$. $x_{i}$ also denotes the line-coordinate of the
$i^{th}$ building and $x_0 < x_1 < ... < x_n$.

The distance between two vertices $x_{i}$ and $x_{j}$ is given by
$|x_{i}-x_{j}|$. Each vertex $x_{i}\in V$ is associated with
an open interval of weights $[w_{i}^{-},w_{i}^{+}]$, denoting the range within which the number of evacuees in building $x_i$ can lie ($0 < w_i^- \leq w_i^+$). Each edge has a capacity $c$. We define $\tau$ as the time taken to travel a unit distance on the edge.

Let $\mathcal{S}$ denote the Cartesian product of all weight intervals for
$0\leq i\leq n$:

\[ \mathcal{S}=\prod_{0\leq i\leq n}[w_{i}^{-},w_{i}^{+}].
\]  

A {\em scenario} $s\in\mathcal{S}$ is an assignment of weights (number of
evacuees) to all vertices. The weight of a vertex $x_{i}$ under scenario $s$ is denoted by $w_{i}(s)$.

In the paper, we refer to a function $f(x)$ as ``unimodal with a unique minimum value'' if there exists an $m$ such the function $f(x)$ is monotonically decreasing for $x\leq m$ and monotonically increasing for $x\geq m$. The minimum value of the function is attained for $x=m$.

\subsection{Evacuation Time}
Consider a path $P=(V,E)$, with uniform edge capacity $c$, and time to travel unit distance $\tau$. We are given a scenario $s\in\mathcal{S}$, i.e., vertex $x_i$ has weight $w_i(s)$.

\subsubsection{Left and Right Evacuation}
Consider a single sink $x\in P$. 

If there were no  other people in the way, the evacuees at $v_i$ could complete evacuating to $x$ in
time \cite{Kamiyama:2006:EAE:2100322.2100346} $|x_i-x| \tau + \lceil w_i(s)/c \rceil -1$.

In reality, people to the left and right of the sink $x$ need to evacuate to $x$ (as shown in Fig. \ref{fig:1-sink-evac}) and the actual time required must take congestion into account.

We define $\Theta_L(P,x,s)$(resp. $\Theta_R(P,x,s)$) to be the time taken by people to the left(resp. right) of sink $x$ to evacuate to $x$ under scenario $s$. The left(resp. right) evacuation time can be expressed as the maximum of an \emph{evacuation function} of the nodes to the left(resp. right) of the sink. The exact expression, taken from \cite{Kamiyama:2006:EAE:2100322.2100346} are as follows:

\begin{equation} \Theta_{L}(P,x,s)=\max_{x_i < x}\left\{
(x-x_{i})\tau+\left\lceil \frac{\sum_{0\leq j\leq i}w_{j}(s)}{c}\right\rceil
-1\right\} \label{eq:left-evac-c}
\end{equation}

\begin{equation} \Theta_{R}(P,x,s)=\max_{x_i > x}\left\{
(x_{i}-x)\tau+\left\lceil \frac{\sum_{i\leq j\leq n}w_{j}(s)}{c}\right\rceil
-1 \right\} \label{eq:right-evac-c}
\end{equation}
where the maximum is taken over the \emph{evacuation function} of every node.

For exposition, we simplify the equations (as done in \cite{ChengHKNSX13,Ni2014a}) by assuming $c=1$ and omitting the constant (i.e., -1). Setting  $c=1$ will help us simplify some of the proofs later in the paper (but note that the case of $c>1$ can be treated essentially in the same manner). Thus the left and right evacuation times, $\Theta_L(P,x,s)$ and $\Theta_R(P,x,s)$, are redefined as follows:

\begin{equation}\Theta_{L}(P,x,s)=\max_{x_i < x}\left\{
(x-x_{i})\tau+\sum_{0\leq j\leq i}w_{j}(s) \right\} \label{eq:left-evac}
\end{equation}
\begin{equation}
\Theta_{R}(P,x,s)=\max_{x_i > x}\left\{
(x_{i}-x)\tau+\sum_{i\leq j\leq n}w_{j}(s) \right\} \label{eq:right-evac}
\end{equation}

\subsubsection{1-Sink Evacuation}
The evacuation time to sink $x$ is the maximum of the left and right evacuation times, $\Theta^1(P,x,s)=\max\left\{ \Theta_{L}(P,x,s),\Theta_{R}(P,x,s)\right\}$. The superscript ``1'' denotes that this is a 1-Sink problem. 

\begin{figure}[tp]
\centering
\begin{subfigure}[b]{0.45\textwidth}
\includegraphics[scale=0.4]{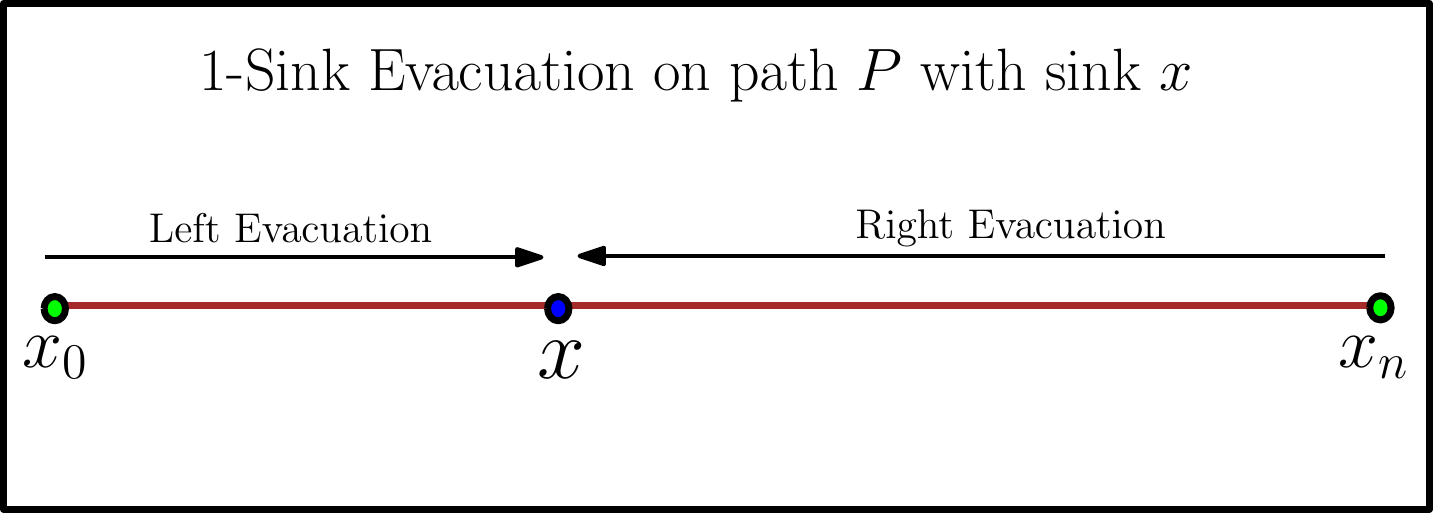}
\caption{$1$-sink evacuation}
\label{fig:1-sink-evac}
\end{subfigure}
\hfill
\begin{subfigure}[b]{0.45\textwidth}
\includegraphics[scale=0.4]{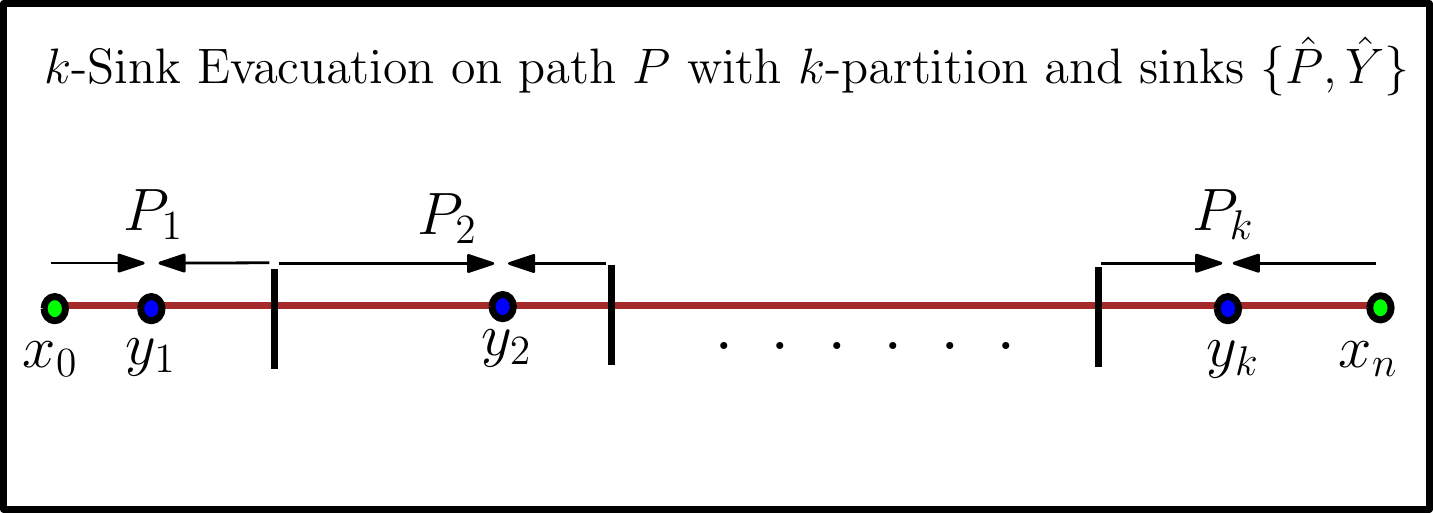}
\caption{$k$-sink evacuation}
\label{fig:k-sink-evac}
\end{subfigure}
\caption{$1$-sink and $k$-sink evacuation illustrations}
\end{figure}

\subsubsection{$k$-Sink Evacuation \label{sec:k-sink-evac}}
Naturally extending the 1-Sink evacuation, suppose we are given a $k$-partition of the path $P$ into $k$ subpaths (or parts) $\hat{P}=\{P_1,P_2,...,P_k\}$ and a set of sinks $\hat{Y}=\{y_1,y_2,...,y_k\}$ such that each $y_i\in P_i$. The evacuees in part $P_i$ are restricted to evacuate only to sink $y_i$ (See Fig. \ref{fig:k-sink-evac}). Every node should completely belong to one part, i.e., no $k$-partition is allowed to split a node. This restricts the people in a building to evacuate to one sink in their part without confusion. The $k$-sink evacuation time will essentially be the maximum of the evacuation time among the individual parts.

Let $\Theta^k(P,\{\hat{P},\hat{Y}\},s)$ denote the $k$-Sink evacuation time on path $P$ with $k$-partition and sinks $\{\hat{P},\hat{Y}\}$ under scenario $s$. Since the evacuation in each subpath $P_i$ are independent of each other, $\Theta^k(P,\{\hat{P},\hat{Y}\},s)$ can be expressed as:

\begin{equation} \Theta^k(P,\{\hat{P},\hat{Y}\},s)=\max_{1\leq i\leq k}
\Theta^1(P_{i},y_i,s) \label{eq:k-sink-evac}
\end{equation}

\begin{definition} Let $d$ be the smallest (and therefore unique) index among the indices of the part(s) which maximizes the term $\Theta^1(P_{i},y_i,s)$. The evacuation time of subpath $P_d$ determines the $k$-Sink evacuation time under scenario $s$. We refer to $P_d$ as the ``dominant part'' under scenario $s$.
\end{definition}

\subsection{Optimal (Non-Regret) $k$-Sink Location Problem}
The Optimal $k$-Sink Location Problem can be defined as follows. The input is a path $P$ with edge capacities $c$, time to travel unit distance $\tau$, and a scenario $s\in\mathcal{S}$.

Let $\Theta^k_{\mathrm{opt}}(P,s)$ be the optimal (minimum) $k$-Sink evacuation time in path $P$ under scenario $s\in\mathcal{S}$. The algorithm needs to provide the $k$-partition and sinks $\{\hat{P},\hat{Y}\}$ as defined in Sect. \ref{sec:k-sink-evac}, which achieves this minimum evacuation time.

\subsection{Regret}
For a choice of $k$-partition and sinks $\{\hat{P},\hat{Y}\}$ and a scenario $s\in\mathcal{S}$, the regret is defined as the difference between the $k$-Sink evacuation time for this choice of $\{\hat{P},\hat{Y}\}$ and the optimal $k$-Sink evacuation time. The regret can be expressed as:

\begin{equation} R(\{\hat{P},\hat{Y}\},s) = \Theta^k(P,\{\hat{P},\hat{Y}\},s)-\Theta^k_{\mathrm{opt}}(P,s) \label{eq:regret}
\end{equation}

The maximum regret (called \emph{max-regret}) achieved (over all scenarios) for a choice of $\{\hat{P},\hat{Y}\}$ is:
\begin{equation} R_{\max}(\{\hat{P},\hat{Y}\}) = \max_{s\in\mathcal{S}}\left\{ R(\{\hat{P},\hat{Y}\},s)  \right\} \label{eq:max-regret}
\end{equation}

If $R_{\max}(\{\hat{P},\hat{Y}\})=R(\{\hat{P},\hat{Y}\},s^*)$ for some scenario $s^*\in\mathcal{S}$, then $s^*$ is called a \emph{worst-case scenario} for $\{\hat{P},\hat{Y}\}$. The minimax-regret for the problem is the minimum \emph{max-regret} over all possible $k$-partitions and sinks $\{\hat{P},\hat{Y}\}$.

\subsection{Minimax-Regret $k$-Sink Location Problem}
The input for the Minimax-Regret $k$-Sink Location Problem is a dynamic path network with path $P$, vertex weight intervals $[w_i^-,w_i^+]$, edge capacity $c$, and time to travel unit distance $\tau$. The problem can be understood as a 2-person game between the algorithm $A$ and the adversary $B$ as follows:

\begin{enumerate}
  \item The algorithm $A$ provides a $k$-partition and sinks $\{\hat{P},\hat{Y}\}$ as defined in Sect. \ref{sec:k-sink-evac}.
  \item The adversary $B$ now chooses a regret-maximizing \emph{worst-case scenario} $s^*\in\mathcal{S}$ with a \emph{max-regret} of $R_{\max}(\{\hat{P},\hat{Y}\})$.
  \item The objective of $A$ is to find the $k$-partition and sinks $\{\hat{P},\hat{Y}\}$ that minimizes the \emph{max-regret}.
\end{enumerate}

\subsection{Sink on Vertex Assumption}
All our definitions until now are generic in the fact that a sink can lie either at a vertex or on an edge (i.e., anywhere on the path). From this point onwards, we are going to assume that a sink always lies at a vertex (and minmax-regret is defined under this assumption). This implies that the end-points of a part should also lie at vertices (since there can be no sink on an edge). For example, if $P_i$ and $P_{i+1}$ are consecutive parts and the right end-point of $P_i$ is some vertex $x_{r_i}$, then the left end-point of $P_{i+1}$ will be $x_{r_i+1}$ and the edge between $x_{r_i}$ and $x_{r_i+1}$ does not take part in the evacuation.

Note that although we are making the sink on vertex assumption, the properties stated later in the paper all hold even without the assumption. With extra complexity, our algorithms could be extended to work without this assumption.

%We believe that a simple extension of our ideas and algorithms is possible in order to avoid making this assumption.

% Give the O(n.k.logn) solution.
\section{A Solution to the Optimal (Non-Regret) $k$-Sink Location Problem}
\label{Sec:Basic}
% Section 3 - A Solution to the Optimal k-Sink Location Problem (Contains the solution to the problem.)

\label{sec:opt-k-sinks}

In this section, we wish to show an $O(kn\log n)$ time solution to the optimal $k$-sink location problem. In our problem, the flow (of people) to the sink is \emph{discrete}, since the objects of evacuation (people) are \emph{discrete} entities. The left and right evacuation times to a sink for this \emph{discrete} flow was defined in Eqs. \ref{eq:left-evac-c} and \ref{eq:right-evac-c}. 

We note that \cite{Higashikawaa} has given an $O(kn\log n)$ algorithm for the optimal $k$-sink location problem, but it was for the \emph{continuous} version of the problem (the object of evacuation is \emph{continuous} like a fluid). They derive the following formula for the left evacuation time to a sink $x$ as:
\begin{equation} \Theta_{L}(P,x,s)=\max_{x_i < x}\left\{
(x-x_{i})\tau+ \frac{\sum_{0\leq j\leq i}w_{j}(s)}{c}
\right\} \label{eq:left-evac-continuous}
\end{equation}
% \begin{equation} \Theta_{R}(P,x,s)=\max_{x_i > x}\left\{
% (x_{i}-x)\tau+\frac{\sum_{i\leq j\leq n}w_{j}(s)}{c}
% \right\} \label{eq:right-evac-continuous}
% \end{equation}
and the right evacuation time is defined similarly.

As we can see from Eq. \ref{eq:left-evac-continuous}, the expressions for the \emph{continuous} version does not involve a ceiling function. The method used in \cite{Higashikawaa} is not extendible for the \emph{discrete} case due to complications arising from the ceiling function in the expression.

We observe some key properties of the optimal solution in the \emph{discrete} case and using a novel data structure which we call the \emph{Bi-Heap}, a dynamic programming based procedure has been proposed and explained in detail in Appendix \ref{app:opt-k-sinks-algo}. The running time of the algorithm is $O(kn\log n)$.

% Introduce Prop1. Structure of w.c. scenarios and Prop2. Unimodality w.r.t parts.
\section{Properties of the Minimax-Regret $k$-Sink Location Problem}
\label{Sec:MinMax}
% Section 4 - A Few Properties of the Minimax-Regret $k$-Sink Location Problem (Introduce Prop1. Structure of w.c. scenarios and Prop2. Unimodality w.r.t parts.)
%Last Updated 2014/04/18 11:15 PM

We now elaborate on a few properties of the Minimax-Regret $k$-Sink Location Problem which will help us come up with efficient algorithms.

\subsection{Structure of the \emph{worst-case scenario}}
We note that the adversary $B$ needs to choose a \emph{worst-case scenario} but is free to choose any \emph{worst-case scenario} since all of them give the same \emph{max-regret}. In this section, Lemmas \ref{lem:out-scenario} and \ref{lem:sub-scenario} are going to show that there always exists a \emph{worst-case scenario} with a particular structure and we will assume without loss of generality that $B$'s chosen \emph{worst-case scenario} has this structure. 

Now, assume that the algorithm $A$ has given $k$-partition and sinks $\{\hat{P},\hat{Y}\}$. 

\begin{lemma}
\label{lem:out-scenario}
Let $s^*_B\in\mathcal{S}$ be a \emph{worst-case scenario}  with its dominant part $P_d\in\hat{P}$.
Then  if $s^*_B$ is modified so that  $w_{i}(s^*_B)=w_{i}^{-}$ if $x_i\notin P_d$ (See Fig. \ref{fig:out-scenario}), $s^*_B$ remains a worst case scenario. (Proof in Appendix \ref{app:out-scenario}.)
\end{lemma}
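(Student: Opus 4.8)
The plan is to show that the modification described in the lemma can only increase (or leave unchanged) the regret of $s^*_B$, which, since $s^*_B$ is already a worst-case scenario with maximum possible regret, forces the regret to stay exactly the same so that the modified scenario is still worst-case. Recall $R(\{\hat P,\hat Y\},s) = \Theta^k(P,\{\hat P,\hat Y\},s) - \Theta^k_{\mathrm{opt}}(P,s)$, so I need to control the two terms separately as we lower the weights outside $P_d$ to their minima.

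First I would handle the numerator term $\Theta^k(P,\{\hat P,\hat Y\},s^*_B) = \max_i \Theta^1(P_i,y_i,s^*_B)$. Lowering weights of vertices outside $P_d$ cannot change $\Theta^1(P_d,y_d,s^*_B)$, since by Equations~\eqref{eq:left-evac} and~\eqref{eq:right-evac} the evacuation time of a part depends only on the weights of vertices inside that part (the prefix/suffix sums are taken within $P_d$). Since $P_d$ is dominant, $\Theta^k(P,\{\hat P,\hat Y\},s^*_B) = \Theta^1(P_d,y_d,s^*_B)$ both before and after the modification (lowering other parts' weights only decreases their evacuation times, so $P_d$ remains the maximizer). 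Hence the first term is unchanged.

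Second, I would argue the optimal value $\Theta^k_{\mathrm{opt}}(P,s)$ is monotone nondecreasing in each vertex weight: increasing any $w_i$ can only make every $k$-sink solution's evacuation time larger or equal (again directly from the monotonicity of the prefix/suffix sums in Equations~\eqref{eq:left-evac}–\eqref{eq:right-evac} and the max/min structure of Equations~\eqref{eq:k-sink-evac} and the definition of $\Theta^k_{\mathrm{opt}}$ as a min over partitions). Therefore lowering the weights outside $P_d$ to $w_i^-$ cannot increase $\Theta^k_{\mathrm{opt}}(P,s^*_B)$; it can only decrease it or leave it the same. Combining the two observations: after the modification the first term is the same and the subtracted term is no larger, so $R(\{\hat P,\hat Y\},\cdot)$ does not decrease. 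Since $s^*_B$ already attains $R_{\max}(\{\hat P,\hat Y\})$, the regret of the modified scenario equals $R_{\max}(\{\hat P,\hat Y\})$, i.e. it is still a worst-case scenario.

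The one subtlety I expect to be the main obstacle is the bookkeeping around the definition of "dominant part'': I must make sure the modified scenario still has $P_d$ as a dominant part (or at least that its $k$-sink evacuation time is still realized by $P_d$) so that the first term genuinely stays put and the later lemmas that assume this structure apply. This is where I would be careful: lowering other parts' weights weakly decreases their $\Theta^1$ values, so $P_d$ remains a maximizer, and since $d$ was the smallest maximizing index it remains so (no new part can overtake it). I would also note explicitly that this argument does not even need $c=1$; it only uses monotonicity of the evacuation functions in the weights, which holds with the ceiling function as well.
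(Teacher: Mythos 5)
Your proposal is correct and follows essentially the same argument as the paper's proof in Appendix~\ref{app:out-scenario}: the evacuation time for the given $\{\hat{P},\hat{Y}\}$ is unchanged because the sub-scenario inside the dominant part $P_d$ is untouched (and lowering other parts' weights cannot make another part overtake it), while $\Theta^k_{\mathrm{opt}}$ can only decrease, so the regret cannot decrease and the modified scenario remains worst-case. Your explicit check that $P_d$ stays dominant and your remark that the argument survives general $c$ with the ceiling are nice touches, but they do not change the route.
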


% \begin{proof} See Appendix \ref{app:out-scenario}.
% \end{proof}

\begin{definition} Given a scenario, a ``\emph{sub-scenario}'' for a subpath is defined as the scenario within that subpath.
\end{definition}

\begin{definition} Consider a subpath $P^{\prime}$ of the path $P$ with leftmost(resp. rightmost) vertex $x_l$(resp. $x_r$). Under some scenario $s\in\mathcal{S}$ a \emph{sub-scenario} for $P^{\prime}$ is called
\emph{left-dominant}(resp. \emph{right-dominant}) if
for some $i$ with $l\leq i\leq r$, $w_{j}(s)=w_{j}^{+}$
(resp. $w_{j}(s)=w_{j}^{-}$) for $l\leq j<i$ and $w_{j}(s)=w_{j}^{-}$
(resp. $w_{j}(s)=w_{j}^{+}$) for $i\leq j\leq r$.
\end{definition}

Given a $k$-partition $\hat{P}=\{P_1,P_2,...,P_k\}$, let $\mathcal{S}_L^i$ (resp. $\mathcal{S}_R^i$) denote the set of all
\emph{left-dominant} (resp. \emph{right-dominant}) sub-scenarios in
the part $P_i$.

\begin{lemma}
\label{lem:sub-scenario} There exists a
\emph{worst-case scenario} $s_B^*$ with its dominant part $P_d\in\hat{P}$ such that the sub-scenario within $P_d$ lies in the set $S_{L}^d\bigcup S_{R}^d$ (See Fig. \ref{fig:sub-scenario}.) (Proof in Appendix \ref{app:sub-scenario}.)
\end{lemma}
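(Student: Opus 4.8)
The plan is to start from the \emph{worst-case scenario} provided by Lemma~\ref{lem:out-scenario}, so that we may assume $w_i(s^*_B)=w_i^-$ for every $x_i\notin P_d$, and then to push the weights \emph{inside} $P_d$ to their interval endpoints one vertex at a time, verifying at each step that the regret does not drop. Two facts carry the argument. First, $\Theta^k_{\mathrm{opt}}(P,\cdot)$ is nondecreasing in every weight coordinate (it is a minimum over partitions of a maximum of functions, each a constant plus a sum of weights) and, more sharply, it is $1$-Lipschitz in each coordinate: raising a single $w_j$ by $\delta$ increases $\Theta^k_{\mathrm{opt}}$ by at most $\delta$, since reusing an optimal partition of the old scenario on the new one increases the evacuation time of the single part that contains $x_j$ by at most $\delta$ (every evacuation function there is a constant plus a sum in which $w_j$ occurs at most once). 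Second, because $P_d$ is the dominant part of $s^*_B$, we have $\Theta^k(P,\{\hat P,\hat Y\},\cdot)=\Theta^1(P_d,y_d,\cdot)$ whenever $P_d$ remains dominant, which lets us track the fixed solution's evacuation time exactly.

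Assume without loss of generality that $\Theta^1(P_d,y_d,s^*_B)=\Theta_L(P_d,y_d,s^*_B)$; the case where $\Theta_R$ attains the maximum is symmetric under exchanging ``left'' with ``right'' and ``$+$'' with ``$-$'' throughout. Let $x_{l_d},x_{r_d}$ be the endpoints of $P_d$ and let $x_{i^*}\in P_d$ attain the maximum in~(\ref{eq:left-evac}) defining $\Theta_L(P_d,y_d,s^*_B)$ (so $x_{i^*}$ lies strictly left of $y_d$). \emph{First move:} for every $x_j\in P_d$ with $j>i^*$, set $w_j(s^*_B):=w_j^-$. Each such $x_j$ lies strictly to the right of $x_{i^*}$, so $w_j$ is absent from the prefix sum defining the evacuation function of $x_{i^*}$, which is therefore unchanged; moreover no left- or right-evacuation term of $P_d$ can increase (weights only decrease), and weights outside $P_d$ are untouched. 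Hence $\Theta^1(P_d,y_d,\cdot)=\Theta_L(P_d,y_d,\cdot)$ is unchanged and $P_d$ stays dominant, so $\Theta^k(P,\{\hat P,\hat Y\},\cdot)$ is unchanged, while $\Theta^k_{\mathrm{opt}}$ cannot increase; therefore the regret does not decrease and the scenario remains worst-case.

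\emph{Second move:} process the vertices $x_j\in P_d$ with $l_d\le j\le i^*$ one at a time, each time raising $w_j(s^*_B)$ to $w_j^+$; let $\delta:=w_j^+-w_j(s^*_B)>0$ be the current increase. Every evacuation term in~(\ref{eq:left-evac}) for $\Theta_L(P_d,y_d,\cdot)$ whose vertex $x_i$ has $i\ge j$ grows by exactly $\delta$, and those with $i<j$ are unchanged, so $x_{i^*}$ (which has $i^*\ge j$) is still a maximizer and $\Theta_L(P_d,y_d,\cdot)$ grows by exactly $\delta$; since $x_j$ lies left of $y_d$, $\Theta_R(P_d,y_d,\cdot)$ is unchanged and still at most $\Theta_L$, so $\Theta^1(P_d,y_d,\cdot)$ --- hence $\Theta^k(P,\{\hat P,\hat Y\},\cdot)$, as $P_d$ is now strictly dominant --- grows by exactly $\delta$. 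By the Lipschitz fact, $\Theta^k_{\mathrm{opt}}$ grows by at most $\delta$, so the regret does not decrease and the scenario stays worst-case, an invariant maintained across all these steps. After both moves, $w_j(s^*_B)=w_j^+$ for $l_d\le j\le i^*$, $w_j(s^*_B)=w_j^-$ for $i^*<j\le r_d$, and $w_i(s^*_B)=w_i^-$ outside $P_d$; the sub-scenario on $P_d$ is thus left-dominant with split index $i^*+1$, i.e.\ it lies in $\mathcal{S}_L^d\subseteq\mathcal{S}_L^d\cup\mathcal{S}_R^d$, which proves the lemma.

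The step I expect to be the main obstacle is the $1$-Lipschitz bound on $\Theta^k_{\mathrm{opt}}$ used above: one must argue that replaying the old optimal partition on the perturbed scenario inflates only the single part containing the perturbed vertex, and inflates it by no more than the perturbation, which reduces to the observation that each evacuation function is a constant plus a sum of weights in which any fixed weight appears at most once. Everything else is bookkeeping about which prefix or suffix sums a given weight enters; the remaining points needing care are confirming that $P_d$ is still the dominant part after each modification and dispatching the $\Theta_R$-dominant case by symmetry.
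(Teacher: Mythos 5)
Your proof is correct and follows essentially the same route as the paper's: fix the maximizer of the dominant (say left) evacuation term in $P_d$, drop the weights to its right within $P_d$ to $w^-$ (leaving the fixed solution's time unchanged while the optimum cannot increase), and raise the weights at and to its left to $w^+$ (increasing the fixed solution's time by exactly the perturbation while the optimum increases by at most that), yielding a left-dominant sub-scenario. The only difference is presentational: you process vertices one at a time and spell out the $1$-Lipschitz bound on $\Theta^k_{\mathrm{opt}}$ and the persistence of $P_d$ as dominant part, which the paper asserts more tersely.
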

% \begin{proof} See Appendix \ref{app:sub-scenario}.
% \end{proof}
\begin{figure}[tp]
\centering
\begin{subfigure}[b]{0.45\textwidth}
\includegraphics[scale=0.4]{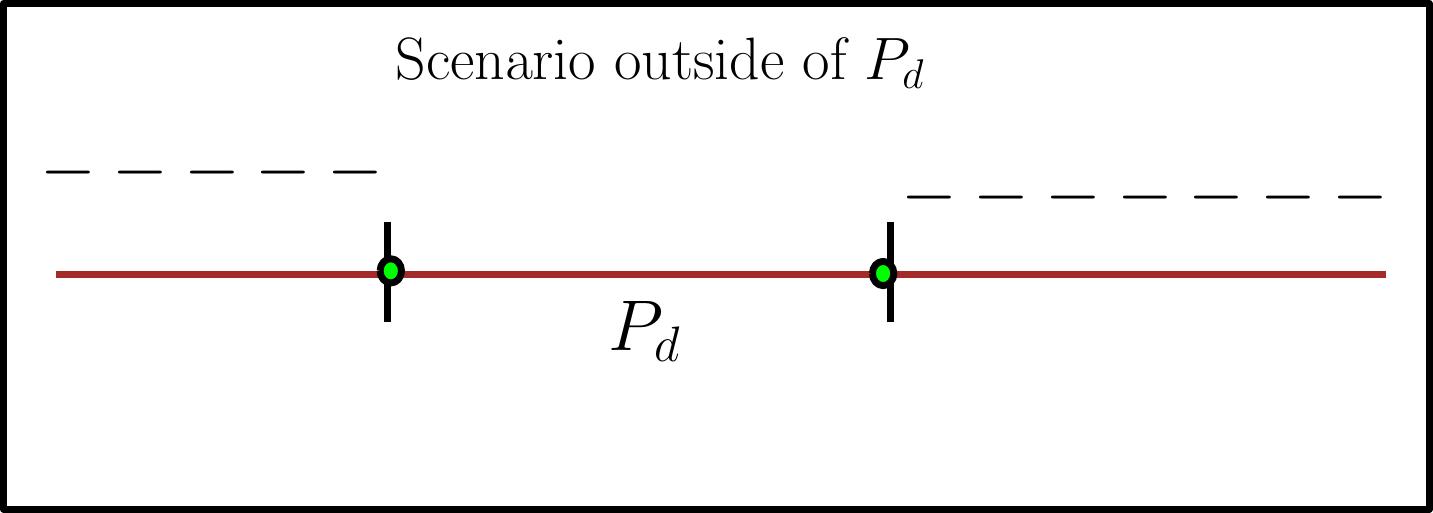}
\caption{Scenario outside $P_d$}
\label{fig:out-scenario}
\end{subfigure}
\hfill
\begin{subfigure}[b]{0.45\textwidth}
\includegraphics[scale=0.4]{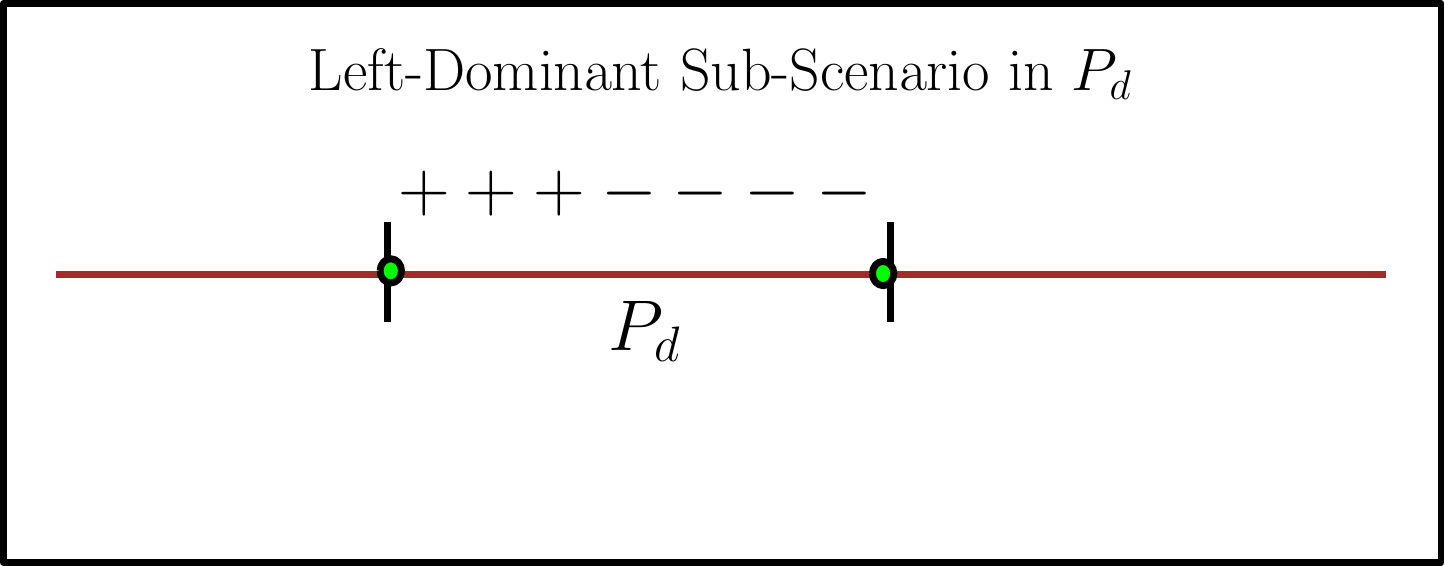}
\caption{A left-dominant sub-scenario in $P_d$}
\label{fig:sub-scenario}
\end{subfigure}
\caption{Structure of a \emph{worst-case scenario}}
\end{figure}

\begin{theorem} \label{thm:wcscenarios}
For any choice of $k$-partition and sinks, $\{\hat{P},\hat{Y}\}$, made by algorithm $A$, there exists a \emph{worst-case scenario} $s^*_B\in\mathcal{S}$ with its dominant part $P_d\in \hat{P}$ of the following form:
\begin{enumerate}
\item $w_{i}(s^*_B)=w_{i}^{-}$ if $x_{i}\notin P_{d}$ and,
\item Sub-scenario in $P_d$ is in the set $\mathcal{S}_{L}^d\bigcup \mathcal{S}_{R}^d$.
\end{enumerate}
\end{theorem}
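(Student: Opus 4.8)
The plan is to obtain Theorem~\ref{thm:wcscenarios} simply by composing Lemmas~\ref{lem:out-scenario} and~\ref{lem:sub-scenario}, applied in that order, so that the only real work is checking that the two normalizations are compatible and that the dominant part is preserved along the way.

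First I would start from an arbitrary \emph{worst-case scenario} $s^*_B$ with dominant part $P_d$ and apply Lemma~\ref{lem:out-scenario} to push every weight outside $P_d$ down to its lower endpoint, obtaining a scenario $s_1$ that is still worst-case. The key observation is that this modification leaves $\Theta^1(P_d,y_d,\cdot)$ unchanged while it can only decrease $\Theta^1(P_i,y_i,\cdot)$ for each $i\neq d$, since a prefix/suffix sum of weights inside $P_i$ cannot grow when weights are lowered. Because $d$ is by definition the smallest index maximizing $\Theta^1(P_i,y_i,s^*_B)$, every part $P_i$ with $i<d$ satisfies $\Theta^1(P_i,y_i,s^*_B)<\Theta^1(P_d,y_d,s^*_B)$ strictly, and the two sides move in compatible directions (or not at all), so $P_d$ remains the dominant part of $s_1$. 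Thus $s_1$ satisfies property~(1) and still has dominant part $P_d$.

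Next I would invoke Lemma~\ref{lem:sub-scenario}, whose argument takes a worst-case scenario together with its dominant part $P_d$ and replaces the sub-scenario inside $P_d$ by a left- or right-dominant one in $\mathcal{S}_{L}^d\cup\mathcal{S}_{R}^d$, keeping the scenario worst-case with dominant part $P_d$. Applying this to $s_1$: the crucial point is that this second step only alters weights at vertices of $P_d$, so it does not disturb the normalization $w_i(s_1)=w_i^-$ for $x_i\notin P_d$ established in the first step. The resulting scenario therefore satisfies both~(1) and~(2) with dominant part $P_d$, which is exactly the claim.

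The step I expect to need the most care is the bookkeeping on the dominant part: one must ensure that Lemma~\ref{lem:out-scenario}'s modification does not create a new, smaller-indexed dominant part (handled above via the strictness of the inequalities defining $d$), and that the hypotheses of Lemma~\ref{lem:sub-scenario} are genuinely met by $s_1$. If one prefers to read Lemma~\ref{lem:sub-scenario} as a pure existence statement rather than as a modification procedure, the exchange argument in its proof is internal to $P_d$ and can be re-run on $s_1$ directly, so the composition still goes through verbatim.
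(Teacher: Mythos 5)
Your proof is correct and is essentially the paper's own argument: the paper derives Theorem~\ref{thm:wcscenarios} by exactly this composition of Lemma~\ref{lem:out-scenario} followed by Lemma~\ref{lem:sub-scenario}, with the compatibility of the two normalizations (the second modification being internal to $P_d$) and the preservation of the dominant part handled inside the lemmas' proofs. Your extra bookkeeping on dominance is sound but not a departure from the paper's route.
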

\begin{proof}
Trivially follows from Lemmas \ref{lem:out-scenario} and \ref{lem:sub-scenario}.\qed
\end{proof}

Let us denote by $\mathcal{S}^*\subseteq \mathcal{S}$, the set of all possible \emph{worst-case scenarios} of the form defined in Theorem \ref{thm:wcscenarios}. Without loss of generality, we assume that the adversary $B$ only chooses a \emph{worst-case scenario} $s_B^*\in\mathcal{S}^*$.

\begin{property}
\label{prop:n-wcscenarios}
Given $k$ parts and sinks $\{\hat{P},\hat{Y}\}$ (or only the parts $\hat{P}$) by algorithm $A$, there are $O(n)$ candidate \emph{worst-case scenarios} from which $B$ chooses $s^*_B\in\mathcal{S}^*$. (Proof in Appendix \ref{app:n-wcscenarios}.)
\end{property}
% \begin{proof}
% See Appendix \ref{app:n-wcscenarios}
% \end{proof}

\begin{property}
\label{prop:n2-wcscenarios}
Overall, there are only $O(n^2)$ possible candidate \emph{worst-case scenarios} for $s^*_B$ irrespective of the $k$-partition/sinks given by the algorithm, i.e., $\left| \mathcal{S}^* \right| = O(n^2)$. (Proof in Appendix \ref{app:n2-wcscenarios}.)
\end{property}
% \begin{proof}
% See Appendix \ref{app:n2-wcscenarios}
% \end{proof}

\subsection{Characterization of Minimax-Regret}
Suppose algorithm $A$ chooses $k$-partition and sinks $\{\hat{P},\hat{Y}\}$, we will assume $x_{l_i}$(resp. $x_{r_i}$) to be the left end(resp. right end) of the $i^{th}$ part $P_i\in\hat{P}$. For any scenario $s\in\mathcal{S}$, the regret can be written as:
\begin{align}
R(\{\hat{P},\hat{Y}\},s) & = \Theta^k(P,\{\hat{P},\hat{Y}\},s) - \Theta^k_{\mathrm{opt}}(P,s) && \nonumber \text{(from Eq. \ref{eq:regret})}\\
& =  \max_{1\leq i\leq k}\left\{\Theta^1(P_i,y_i,s)\right\} - \Theta^k_{\mathrm{opt}}(P,s) && \nonumber \text{(from Eq. \ref{eq:k-sink-evac})}\\
& =  \max_{1\leq i\leq k}\left\{\Theta^1(P_i,y_i,s) - \Theta^k_{\mathrm{opt}}(P,s)\right\} && \nonumber \\
& =  \max_{1\leq i\leq k}\left\{R_{l_ir_i}(s,y_i)\right\}. \label{eq:regret1} &&
\end{align}

The term $R_{l_ir_i}(s,y_i)$ refers to the regret under scenario $s\in\mathcal{S}$ when the part $P_i$ is assumed to be the dominant part and $y_i$ is the sink in $P_i$. We refer to $P_i$ as the Assumed Dominant Part(ADP). In reality there is only one unique dominant part, say $P_d$, under scenario $s$. It is easy to see that the maximum is achieved for this dominant part $P_d$, i.e., when $i=d$.

The \emph{max-regret} for $A$'s choice of $\{\hat{P},\hat{Y}\}$ can be written as:
\begin{align}
R_{\max}(\{\hat{P},\hat{Y}\}) & = \max_{s\in \mathcal{S}^*}\left\{R(\{\hat{P},\hat{Y}\},s)\right\} && \text{(from Eq. \ref{eq:max-regret})}\nonumber \\
 & = \max_{s\in S^*}\max_{1\leq i\leq k}\left\{R_{l_ir_i}(s,y_i)\right\} && \text{(from Eq. \ref{eq:regret1})}\nonumber \\
 & = \max_{1\leq i\leq k}\left\{\max_{s\in S^*}\left\{R_{l_ir_i}(s,y_i)\right\}\right\} && \nonumber \\
 & = \max_{1\leq i\leq k}\left\{R_{l_ir_i}(y_i)\right\}. \label{eq:max-regret1}&&
\end{align}

The term $R_{l_ir_i}(y_i)$ denotes the maximum regret over all candidate \emph{worst-case scenarios} if $P_i$ were the ADP with sink $y_i$. This characterization of the \emph{max-regret}, $R_{\max}(\{\hat{P},\hat{Y}\})$, shows the independence of the parts for \emph{max-regret} calculations.

The minimax-regret can be written as:
\begin{align}
\min_{\{\hat{P},\hat{Y}\}} R_{\max}(\{\hat{P},\hat{Y}\}) & = \min_{\{\hat{P},\hat{Y}\}} \max_{1\leq i\leq k}\left\{R_{l_ir_i}(y_i)\right\} && \text{(from Eq. \ref{eq:max-regret1})} \nonumber \\
& = \min_{\hat{P}} \max_{1\leq i\leq k}\left\{\min_{\hat{Y}}R_{l_ir_i}(y_i)\right\} && \label{eq:mmr-it-sinks} \\
& = \min_{\hat{P}} \max_{1\leq i\leq k}\left\{R_{l_ir_i}\right\}. && \label{eq:mmr-rliri} 
\end{align}

The term $R_{l_ir_i}$ denotes the minimax-regret over all candidate \emph{worst-case scenarios} if $P_i$ were the ADP. The minimax-regret for ADP $P_i$ can be thought of as iterating over all possible sinks $y_i \in P_i$ and choosing the sink which minimizes the \emph{max-regret} $R_{l_ir_i}(y_i)$.

We now state two important lemmas which will be used in further proofs.
\begin{lemma} \label{lem:rij-subpath-inc}
For any ADP $P_i$ with left end $x_{l_i}$ and right end $x_{r_i}$, extending the part (subpath) cannot decrease the minimax-regret, i.e., $R_{l_ir_i} \leq R_{l_i(r_i+1)}$ and $R_{l_ir_i} \leq R_{(l_i-1)r_i}$. In other words, extending a subpath cannot decrease the $R_{lr}$ term  for that subpath. (Proof in Appendix \ref{app:rij-subpath-inc}.)
\end{lemma}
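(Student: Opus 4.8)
\textbf{Proof proposal for Lemma~\ref{lem:rij-subpath-inc}.}
The plan is to exhibit, for every scenario relevant to the larger subpath $P' = [x_{l_i}, x_{r_i+1}]$, a corresponding scenario relevant to the smaller subpath $P = [x_{l_i}, x_{r_i}]$ whose regret is at least as large, and then conclude by taking maxima. Recall from Eq.~\eqref{eq:mmr-rliri} that $R_{l_ir_i} = \min_{y} \max_{s} R_{l_ir_i}(s,y)$, where the outer minimum is over sinks $y\in P$ and the max is over candidate worst-case scenarios, and similarly for $R_{l_i(r_i+1)}$. Fix the sink $y^\star$ that achieves the minimax-regret for the \emph{larger} part $P'$; it may or may not lie in $P$. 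First I would handle the case $y^\star \in P$: then $y^\star$ is also a legal sink for $P$, and for any worst-case scenario $s$ for $P$ with $P$ as ADP, the evacuation time $\Theta^1(P, y^\star, s)$ is dominated by $\Theta^1(P', y^\star, s')$ for the scenario $s'$ that agrees with $s$ on $P$ and puts $w^+_{r_i+1}$ at the new vertex, because adding a vertex to the right only increases $\Theta_R$ (by monotonicity of the right-evacuation function in Eq.~\eqref{eq:right-evac}), while $\Theta^k_{\mathrm{opt}}(P,s) \ge \Theta^k_{\mathrm{opt}}(P,s')$ need not hold directly — so I instead compare against $\Theta^k_{\mathrm{opt}}$ on the \emph{full} path $P$, which is the same quantity appearing in both regret definitions via Eq.~\eqref{eq:regret1}.

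More carefully, the cleanest route avoids splitting on where $y^\star$ lies. I would argue directly on the quantity $R_{l_ir_i}(s,y)$ itself. By Eq.~\eqref{eq:regret1} and the discussion after it, $R_{l_ir_i}(s,y) = \Theta^1([x_{l_i},x_{r_i}], y, s) - \Theta^k_{\mathrm{opt}}(P,s)$, where the second term depends only on the global scenario $s$ and not on the subpath boundaries. Thus extending the right boundary from $x_{r_i}$ to $x_{r_i+1}$ changes only the first term, from $\Theta^1([x_{l_i},x_{r_i}], y, s)$ to $\Theta^1([x_{l_i},x_{r_i+1}], y, s)$; since the latter subpath contains the former and the extra vertices contribute only additional terms inside the $\max$ of Eqs.~\eqref{eq:left-evac}--\eqref{eq:right-evac} (each of which is nonnegative given $w^-_j > 0$), we get $\Theta^1([x_{l_i},x_{r_i}], y, s) \le \Theta^1([x_{l_i},x_{r_i+1}], y, s)$ for every common sink $y$ and scenario $s$. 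Hence $R_{l_ir_i}(s,y) \le R_{l_i(r_i+1)}(s,y)$ pointwise. Taking $\max$ over $s\in\mathcal{S}^*$ gives $R_{l_ir_i}(y) \le R_{l_i(r_i+1)}(y)$ for every $y\in P$; taking $\min$ over $y$ — noting the minimum for the larger part ranges over a superset of sinks and so can only be smaller or equal — requires a short argument, namely that $R_{l_i(r_i+1)} = \min_{y\in P'} R_{l_i(r_i+1)}(y) \ge \min_{y\in P'}\big(\text{something}\big)$; the subtlety is that $\min$ over the larger sink set could in principle be \emph{smaller}. To close this I would observe that $R_{l_i(r_i+1)} = \min_{y \in P'} R_{l_i(r_i+1)}(y)$ and restrict attention: I in fact want the \emph{reverse} inequality direction on the domains, so the argument must instead show $\min_{y\in P'} R_{l_i(r_i+1)}(y) \ge R_{l_ir_i}$.

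The resolution is to use the fact, provable from the unimodality of the one-sink evacuation function in a subpath (which follows from Eqs.~\eqref{eq:left-evac}--\eqref{eq:right-evac} and is the kind of property invoked elsewhere in Section~\ref{Sec:MinMax}), that the optimal sink for the larger part, when it falls outside $P$ — say to the right of $x_{r_i}$ — can be ``pulled back'' to $x_{r_i}$ without increasing $\max_s R_{l_i(r_i+1)}(s,y)$ restricted to scenarios with the ADP being the larger part; but more simply, since every scenario worst-case for ADP $[x_{l_i},x_{r_i}]$ with sink $y\in P$ satisfies $R_{l_ir_i}(s,y)\le R_{l_i(r_i+1)}(s,y)\le \max_s R_{l_i(r_i+1)}(s, y) $, and since for the smaller part we are free to choose \emph{any} sink in $P$, we have $R_{l_ir_i} = \min_{y\in P}\max_s R_{l_ir_i}(s,y) \le \min_{y\in P}\max_s R_{l_i(r_i+1)}(s,y)$. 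It then remains to compare $\min_{y\in P}$ with $\min_{y\in P'}$ on the right-hand side; here I would invoke that the map $y\mapsto R_{l_i(r_i+1)}(y)$ is unimodal (a structural fact I expect to be established, or immediately derivable, by the time this lemma is used), so its minimum over $P' $ and the behaviour at the boundary vertex $x_{r_i}$ can be related — and in the worst case the minimizer over $P'$ lying outside $P$ still satisfies the bound because moving the sink toward the interior of $P$ decreases the right-evacuation contribution. The symmetric statement $R_{l_ir_i}\le R_{(l_i-1)r_i}$ follows by a mirror-image argument. \textbf{The main obstacle} I anticipate is precisely this last point: cleanly justifying that enlarging the set of admissible sinks for the larger subpath does not let the minimax-regret drop below $R_{l_ir_i}$, which hinges on a monotonicity/unimodality property of $R_{lr}(y)$ in $y$ rather than on the elementary monotonicity of $\Theta^1$ in the subpath; making that step rigorous without circularity is the crux.
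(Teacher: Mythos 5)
You correctly prove the easy half: for every sink $y$ common to both parts and every scenario $s$, $\Theta^1([x_{l_i},x_{r_i}],y,s)\le\Theta^1([x_{l_i},x_{r_i+1}],y,s)$, so after subtracting the common term $\Theta^k_{\mathrm{opt}}(P,s)$ and maximizing over scenarios, $R_{l_ir_i}(y)\le R_{l_i(r_i+1)}(y)$; this matches the first claim in the paper's proof. The genuine gap is the step you yourself flag as the main obstacle: the larger part admits one extra sink, $x_{r_i+1}$, so the minimum over sinks could in principle drop, and your proposal never establishes a lower bound on $R_{l_i(r_i+1)}(x_{r_i+1})$. The paper closes this with a short direct comparison that your proposal misses: it pairs the new sink $x_{r_i+1}$ of the larger part with the sink $x_{r_i}$ of the smaller part. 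When the sink is the rightmost vertex of its part there are no evacuees to its right, so the evacuation time reduces to the left evacuation time (Eq.~\ref{eq:left-evac}); since $x_{r_i+1}$ lies farther right and its prefix sums cover a superset of vertices, $\Theta^1([x_{l_i},x_{r_i}],x_{r_i},s)\le\Theta^1([x_{l_i},x_{r_i+1}],x_{r_i+1},s)$ for every scenario $s$, hence $R_{l_ir_i}(x_{r_i})\le R_{l_i(r_i+1)}(x_{r_i+1})$ after subtracting $\Theta^k_{\mathrm{opt}}(P,s)$ and taking the maximum over scenarios. With this, every sink-regret of the larger part dominates some sink-regret of the smaller part, so the minima satisfy $R_{l_ir_i}\le R_{l_i(r_i+1)}$, and the left extension is symmetric.

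Your fallback via unimodality of $y\mapsto R_{l_i(r_i+1)}(y)$ does not repair this. First, unimodality points the wrong way: if the minimizer over the larger part is the new endpoint $x_{r_i+1}$, unimodality only says its regret is at most that of neighbouring sinks, whereas you need a \emph{lower} bound on it relative to $R_{l_ir_i}$. Second, that unimodality is only established much later (Lemma~\ref{lem:rji-unimodal}) and is not needed here, as the paper's elementary endpoint-to-endpoint comparison shows. Third, your heuristic that moving the sink toward the interior of $P$ decreases the right-evacuation contribution is backwards: moving the sink left from $x_{r_i+1}$ creates a nonzero right-evacuation term for the evacuees at $x_{r_i+1}$. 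So the proposal is incomplete at exactly the decisive step; the missing ingredient is the simple comparison of the regret at the new sink $x_{r_i+1}$ in the extended part against the regret at $x_{r_i}$ in the original part.
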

% \begin{proof}
% See Appendix \ref{app:rij-subpath-inc}.
% \end{proof}

\begin{lemma} \label{lem:rijs-subpath-inc}
Let $\hat{P}_q=\{P_1,P_2,...,P_q\}$ be the minimax-regret placement of the first $q$ parts when the right-end of the $q^{th}$ part is restricted to be $x_{r_q}$. Let $\hat{P}_q^{\prime}=\{P_1^{\prime},P_2^{\prime},...,P_q^{\prime}\}$ be the minimax-regret placement of the first $q$ parts when the right-end of the $q^{th}$ part is restricted to be $x_{r_q+1}$. Then, if the left end(resp. right end) of the $i^{th}$ part in $\hat{P}_q$ is $x_{l_i}$(resp. $x_{r_i}$) and in $\hat{P}_q^{\prime}$  is $x_{l_i^{\prime}}$(resp. $x_{r_i^{\prime}}$), then
\begin{equation*}
\max_{1\leq i\leq q}\left\{R_{l_ir_i}\right\} \leq \max_{1\leq i\leq q}\left\{R_{l_i^{\prime} r_i^{\prime}} \right\}
\end{equation*}
In other words, the minimax-regret considering only the first $q$ parts as ADPs for the $q$-partition $\hat{P}_q^{\prime}$ is greater than or equal to that for the $q$-partition $\hat{P}_q$. (Proof in Appendix \ref{app:rijs-subpath-inc}.)
\end{lemma}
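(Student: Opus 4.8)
The plan is to prove this by a cut-and-paste (exchange) argument: starting from the optimal $q$-partition $\hat P_q^{\prime}=\{P_1^{\prime},\dots,P_q^{\prime}\}$ of the prefix ending at $x_{r_q+1}$, I would construct from it a valid $q$-partition $\hat P^{\prime\prime}=\{P_1^{\prime\prime},\dots,P_q^{\prime\prime}\}$ of the shorter prefix ending at $x_{r_q}$ whose objective value is no larger, and then invoke the optimality of $\hat P_q$. The point to exploit is that, by Eq.~\ref{eq:mmr-rliri}, the quantity $\max_{1\le i\le q}\{R_{l_ir_i}\}$ is the minimum of $\max_{1\le i\le q}\{R_{l_ir_i}\}$ over \emph{all} $q$-partitions of the prefix ending at $x_{r_q}$, and likewise $\max_{1\le i\le q}\{R_{l_i^{\prime}r_i^{\prime}}\}$ is that minimum over all $q$-partitions of the prefix ending at $x_{r_q+1}$; so it suffices to produce one such $\hat P^{\prime\prime}$ with $\max_{1\le i\le q} R_{l_i^{\prime\prime}r_i^{\prime\prime}}\le \max_{1\le i\le q} R_{l_i^{\prime}r_i^{\prime}}$.

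\emph{Construction.} I would split into two cases. If the last part $P_q^{\prime}$ contains at least two vertices, simply delete its rightmost vertex $x_{r_q+1}$: set $P_i^{\prime\prime}=P_i^{\prime}$ for $i<q$ and $P_q^{\prime\prime}=P_q^{\prime}\setminus\{x_{r_q+1}\}$. If instead $P_q^{\prime}=\{x_{r_q+1}\}$ is a singleton, then $P_1^{\prime},\dots,P_{q-1}^{\prime}$ already form a $q-1$ partition of the prefix ending at $x_{r_q}$; since that prefix has $r_q+1\ge q$ vertices (it admits the $q$-partition $\hat P_q$), some part $P_j^{\prime}$ with $j\le q-1$ has at least two vertices, and I split it into two nonempty consecutive subpaths $A,B$, letting $\hat P^{\prime\prime}$ be $P_1^{\prime},\dots,P_{q-1}^{\prime}$ with $P_j^{\prime}$ replaced by $A,B$. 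In either case $\hat P^{\prime\prime}$ is a valid $q$-partition of the prefix ending at $x_{r_q}$.

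\emph{Bounding the objective.} In the first case only the last part changed, and it shrank, so Lemma~\ref{lem:rij-subpath-inc} gives $R_{l_q^{\prime\prime}r_q^{\prime\prime}}\le R_{l_q^{\prime}r_q^{\prime}}$ while all other terms are unchanged, whence $\max_i R_{l_i^{\prime\prime}r_i^{\prime\prime}}\le\max_i R_{l_i^{\prime}r_i^{\prime}}$. In the second case $A$ and $B$ are subpaths of $P_j^{\prime}$, so iterating Lemma~\ref{lem:rij-subpath-inc} (in its ``a subpath dominates each of its subpaths'' direction) bounds the $R$-value of each of $A,B$ by $R_{l_j^{\prime}r_j^{\prime}}$, the other $q-2$ parts being unchanged; again $\max_i R_{l_i^{\prime\prime}r_i^{\prime\prime}}\le\max_i R_{l_i^{\prime}r_i^{\prime}}$. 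Finally, since $\hat P^{\prime\prime}$ is a particular $q$-partition of the prefix ending at $x_{r_q}$ and $\hat P_q$ is optimal among all of them, $\max_{1\le i\le q} R_{l_ir_i}\le\max_{1\le i\le q} R_{l_i^{\prime\prime}r_i^{\prime\prime}}\le\max_{1\le i\le q} R_{l_i^{\prime}r_i^{\prime}}$, which is exactly the claimed inequality.

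The only genuinely delicate point — the main obstacle — is the boundary case where the extra vertex $x_{r_q+1}$ sits in its own singleton part of $\hat P_q^{\prime}$, so that one cannot merely delete it without falling below $q$ parts; this is handled by re-splitting an earlier, necessarily larger, part, which is why the counting fact $r_q+1\ge q$ and the subpath-monotonicity of $R_{lr}$ (Lemma~\ref{lem:rij-subpath-inc}, itself resting on the worst-case-scenario structure of Theorem~\ref{thm:wcscenarios}) are both needed. Apart from that, the argument is a routine monotonicity/exchange argument and uses no further properties of the quantities $R_{lr}$.
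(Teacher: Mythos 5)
Your proof is correct and takes essentially the same route as the paper: truncate the last part of the optimal partition $\hat P_q^{\prime}$ for the longer prefix to obtain a valid $q$-partition of the prefix ending at $x_{r_q}$, bound its objective via the subpath monotonicity of Lemma~\ref{lem:rij-subpath-inc}, and invoke the optimality of $\hat P_q$ (the paper phrases this last step as a contradiction rather than directly). Your explicit handling of the degenerate case where $P_q^{\prime}=\{x_{r_q+1}\}$ is a singleton is a small addition the paper's proof silently glosses over, and it is handled correctly by re-splitting a larger earlier part.
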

% \begin{proof}
% See Appendix \ref{app:rijs-subpath-inc}
% \end{proof}

\subsection{Unimodality of Minimax-Regret w.r.t Parts}
\begin{property}\label{prop:unimodality-partitions}
In the Minimax-Regret $k$-sink Location Problem, the minimax-regret as a function of the left end of the $i^{th}$ part is unimodal with a unique minimum value given that parts $\{P_{i+1},P_{i+2},...,P_k\}$ are fixed. (Proof in Appendix \ref{app:unimodality-partitions}.)
\end{property} 
% \begin{proof}
% See Appendix \ref{app:unimodality-partitions}.
% \end{proof}

% Description of the Binary Search Based Algorithm
\section{A Binary Search Based Algorithm}
\label{Sec:Alg1}
% Section 5 - A Binary Search Based Algorithm (Description of the algorithm.)
%Last Updated 2014/04/18 11:15 PM

\label{sec:algo-bs}
In this section, we give a binary-search based algorithm for finding the Minimax-Regret $k$-Sink Locations. From Eq. \ref{eq:mmr-rliri}, we know that the $k$-sink minimax-regret can be written as:
\begin{align}
\min_{\{\hat{P},\hat{Y}\}} R_{\max}(\{\hat{P},\hat{Y}\}) & = \min_{\hat{P}} \max_{1\leq i\leq k}\left\{R_{l_ir_i}\right\} && \nonumber
\end{align}

Since we have proved in Property \ref{prop:unimodality-partitions} that the minimax-regret is unimodal as a function of the left end of the $i^{th}$ part when the parts $\{P_{i+1},P_{i+2},...,P_k\}$ are fixed, we can do a nested binary search for the locations.  First, binary search for the left end of the $k^{th}$ part. For each fixed location, binary  search  for the left end of the $(k-1)^{st}$ part, and so on until searching for the left end of the $2^{nd}$ part. This searching will result in $O((\log n)^{k-1})$ possible placements of the parts.

For each fixed placement of the parts, the minimax-regret calculation involves finding the $R_{l_ir_i}$ value for all parts $P_i$, and taking the maximum of the values. This can be done in $O(kn^2\log n)$ time. The details of this procedure can be found in Appendix \ref{app:rliri-procedure}. This procedure to find all $R_{l_ir_i}$ values for fixed placements is called at most $O((\log n)^{k-1})$ times. Therefore, the total running time of the algorithm is $O(kn^2(\log n)^k)$.

This algorithm performs better than the previously known best result of $O\left(n^{1+k} (\log n)^{1 + \log k}\right)$ for general $k$ given in \cite{Ni2014a}.

% Describes the derivation of the recurrence to solve the minimax-regret problem, defines R_{ij}, Very Important Section.
\section{An Alternative Approach}
\label{Sec:Rec}
% Section 6 - An alternative approach (Describes the derivation of the recurrence to solve the minimax-regret problem, defines R_{ij}, Very Important Section.)
%Last Updated 2014/04/18 11:15 PM
% Section 6 - An Alternative Approach 
%The Recurrence for minmax regret

%Last Updated 2014/04/18 11:15 PM

\label{sec:alt-approach}
The binary-search based algorithm, though better than previously known results, is not very asymptotically pleasing since there is a $k$ in the exponent. In this section, we are going to give a different approach to solving the minimax-regret $k$-sink location problem. Using the characterization of the minimax-regret given in Eq. \ref{eq:mmr-rliri}, we derive a recurrence for the minimax-regret and implement the recurrence using a Dynamic Programming(DP) approach in Sect. \ref{sec:dp-algo}.

\subsection{A Recurrence for the Minimax-Regret}
From Eq. \ref{eq:mmr-rliri}, the minimax-regret can be written as:
\begin{align}
\min_{\{\hat{P},\hat{Y}\}} R_{\max}(\{\hat{P},\hat{Y}\}) & = \min_{\hat{P}} \max_{1\leq i\leq k}\left\{R_{l_ir_i}\right\} && \nonumber \\
& = \min_{\hat{P}} \max\left(\max_{1\leq i\leq k-1}\left\{R_{l_ir_i}\right\},R_{l_kr_k}\right) && \nonumber \\
& = \min_{P_k} \max\left(\min_{\hat{P}-P_k}\max_{1\leq i\leq k-1}\left\{R_{l_ir_i}\right\},R_{l_kr_k}\right) && \nonumber
\end{align}

The term $\hat{P}-P_k$ denotes the set $\{P_1,P_2,..,P_{k-1}\}$. This gives rise to a natural recurrence for the minimax-reget:
\begin{align}
\min_{\hat{P}} \max_{1\leq i\leq k}\left\{R_{l_ir_i}\right\} & = \min_{P_k} \max\left(\min_{\hat{P}-P_k}\max_{1\leq i\leq k-1}\left\{R_{l_ir_i}\right\},R_{l_kr_k}\right) && \label{eq:mmr}
\end{align}

Let $M(q,i)$ denote the minimax-regret considering only the first $q$ parts as ADPs and the right end of the $q^{th}$ part is $x_i$. From Eq. \ref{eq:mmr}, the following recurrence relation holds for the minimax-regret:
\begin{equation}\label{eq:rec}
M(q,i) = \min_{0\leq j\leq i}\left\{\max\left(M(q-1,j-1),R_{ji}\right)\right\}
\end{equation}

The objective of the algorithm would be to find $M(k,n)$. In Sect. \ref{sec:dp-algo} we implement a DP algorithm based on this recurrence relation.

% Describes the DP algorithm to implement the recurrence, including the precalculation of R_{ij}.
\section{A Dynamic Programming Based Algorithm}
\label{Sec:Alg2}
% Section 7 - A Dynamic Programming Based Algorithm (Describes the DP algorithm to implement the recurrence, including the precalculation of R_{ij}.)
%Last Updated 2014/04/18 11:30 PM
\label{sec:dp-algo}

In Sect. \ref{sec:alt-approach}, a recurrence for the minimax-regret was defined in Eq. \ref{eq:rec}. Note that in the recurrence, the index $j-1$ denotes the right end of the $(q-1)^{th}$ part. We will implement this recurrence by first precomputing $R_{ji}$ for all possible values of $j$ and $i$ ($j<i$). Thus $R_{ji}$ for any $j,i$ will be available in a lookup table accessed in $O(1)$ time. This precomputation is described in Sect. \ref{sec:rji-precomp}.

A naive DP table filling implementation of the recurrence in Eq. \ref{eq:rec} is going to give us an $O(kn^2)$ procedure.  This can be improved to $O(kn)$. However, since the DP procedure is not the bottleneck for the algorithm, we defer the details of the $O(kn)$ improvement to Appendix \ref{app:mmr-k-sink-dp}.

\subsection{Precomputation of $R_{ji}$ for all subpaths\label{sec:rji-precomp}}
$R_{ji}$ needs to be precomputed and stored for all subpaths (from $x_j$ to $x_i$) for use in the DP algorithm. Before we precompute $R_{ji}$ for all $j,i$ (for $j<i$) pairs, we need to precompute the optimal $k$-Sink evacuation time for all candidate \emph{worst-case scenarios} $s_B^*\in\mathcal{S}^*$.

From Sect. \ref{sec:opt-k-sinks}, given a scenario we can calculate the $k$-Sink evacuation time in $O(kn\log n)$ time.  Also, from Prop. \ref{prop:n2-wcscenarios}, we know that there are a total of $O(n^2)$ candidate \emph{worst-case scenarios}. Therefore, this precomputation takes time $O(kn^3\log n)$.

\subsubsection{Naive Approach - $O(n^5)$} \label{sec:rji-naive}
\begin{enumerate}
\item There are $O(n^2)$ possible subpaths (Assumed Dominant Parts), say with left-end $x_l$ and right-end $x_r$.
\item Within each subpath, there are $O(n)$ possible sinks $x_t (l\leq t\leq r)$, and 
\item For each sink in the ADP there can be $O(n)$ candidate \emph{worst-case scenarios} $s_B^*\in\mathcal{S}^*$ (From Theorem \ref{thm:wcscenarios}).
\item Given an ADP as a subpath from $x_l$ to $x_r$, a sink $x_t$ and a scenario $s_B^*$, in $O(n)$ time we can calculate the regret $R_{lr}(s_B^*,x_t)$ ($= \Theta^1([x_l,x_r],x_t,s_B^*)  - \Theta^k_{\mathrm{opt}}(P,s_B^*)$), since we have already pre-calculated $\Theta^k_{\mathrm{opt}}(P,s_B^*)$, the optimal $k$-sik evacuation time for all candidate \emph{worst-case scenarios}.
\end{enumerate}

Therefore, the total time complexity to precompute $R_{ji}$ using this approach is $O(n^5)$. At this point, we already have a polynomial time algorithm for the Minimax-Regret $k$-Sink Location Problem, with precomputation of $R_{ji}$'s in $O(n^5)$ time dominating the running time (The precomputation of $\Theta^k_{\mathrm{opt}}(P,s_B^*)$ for all \emph{worst-case scenarios} takes $O(kn^3\log n)$ time and the DP table filling procedure takes $O(kn)$ time.)

However, by observing some properties of the minimax-regret, we are able to bring down the running time of the $R_{ji}$ precomputation to $O(n^3)$. The reduction from $O(n^5)$ to $O(n^3)$ running time is given in Appendix \ref{app:rji-precomputation}.

\subsection{Total Running Time}
The total running time can be split as:
\begin{itemize}
\item Precomputing optimal $k$-sink evacuation times for the $O(n^2)$ candidate \emph{worst-case scenarios} - $O(kn^3\log n)$.
\item Precomputation of $R_{ji}$'s - $O(n^3)$.
\item Algorithm to find minimax-regret k-Sink location - $O(kn)$.
\end{itemize}

The overall running time is dominated by the precomputation of optimal $k$-sink evacuation times for candidate \emph{worst-case scenarios}. Thus, the running time of the algorithm is $O(kn^3\log n)$.

% Bibliography. The file is main.bib
% \bibliographystyle{splncs03} 
\bibliographystyle{plain} 
\bibliography{main}

\pagebreak
% The appendix
\section*{\appendixname}
% The appendix
\newcommand{\Cost}[2]{{\rm Cost}(#1,\,#2)}

\begin{appendix}
\section{An $O(kn\log n)$ Dynamic Programming Algorithm for the Optimal $k$-Sink Location Problem}
\label{app:opt-k-sinks-algo}
From this point onwards, whenever we refer to the optimal $k$-sink location problem, we always mean the general \emph{discrete} version, i.e with the ceiling function kept and any capacity $c > 0$ (not necessarily $c=1$ in Eqs. \ref{eq:left-evac-c} and \ref{eq:right-evac-c}.)

We present a solution to the optimal $k$-sink location problem by first establishing a recurrence for the optimal $k$-sink evacuation time and giving a DP algorithm to implement the recurrence. Before doing that, we are going the state the following intuitive lemma with proof deferred to Appendix \ref{app:opt-k-sink-path-subpath}:

\begin{lemma} \label{lem:opt-k-sink-path-subpath}
Given a scenario, the optimal $k$-sink evacuation time of a path is greater than or equal to the optimal $k$-sink evacuation time of any of its subpaths.
\end{lemma}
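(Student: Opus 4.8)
Fix the scenario $s$ and write $P'$ for the subpath in question. The plan is to start from an \emph{optimal} $k$-sink solution on $P$, intersect its parts with $P'$, and show the resulting configuration is a feasible $k$-sink configuration on $P'$ of no greater cost; this immediately gives $\Theta^k_{\mathrm{opt}}(P',s)\le\Theta^k_{\mathrm{opt}}(P,s)$.

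Concretely, I would let $\{\hat P,\hat Y\}$ with $\hat P=\{P_1,\dots,P_k\}$, $\hat Y=\{y_1,\dots,y_k\}$ achieve $\max_{1\le i\le k}\Theta^1(P_i,y_i,s)=\Theta^k_{\mathrm{opt}}(P,s)$, and set $P_i':=P_i\cap P'$. Since $P'$ and every $P_i$ are intervals of consecutive vertices, each nonempty $P_i'$ is again such an interval, and the nonempty $P_i'$'s form a partition of $P'$ into some number $j\le k$ of consecutive parts. The crux is the following claim: for each $i$ with $P_i'\neq\emptyset$ there is a sink $y_i'\in P_i'$ with $\Theta^1(P_i',y_i',s)\le\Theta^1(P_i,y_i,s)$. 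If $y_i\in P_i'$, I would take $y_i'=y_i$: because $P_i'\subseteq P_i$, they agree on vertex weights, and the left endpoint of $P_i'$ is not to the left of that of $P_i$, every term occurring in the maxima defining $\Theta_L(P_i',y_i,s)$ and $\Theta_R(P_i',y_i,s)$ (Eqs.~\ref{eq:left-evac}--\ref{eq:right-evac}) occurs, with value at least as large, in the corresponding maxima over $P_i$ --- the index ranges shrink and, since all weights are positive, the governing prefix/suffix sums do not grow. If $y_i\notin P_i'$, then $P_i'$ sits strictly on one side of $y_i$ within $P_i$, say to its right (the left case is symmetric); I would put $y_i'$ at the leftmost vertex of $P_i'$, so that $\Theta^1(P_i',y_i',s)=\Theta_R(P_i',y_i',s)$, and then for each vertex $x_t\in P_i'$ its distance to $y_i'$ is at most its distance to $y_i$ while its suffix sum within $P_i'$ is at most its suffix sum within $P_i$, so each term of $\Theta_R(P_i',y_i',s)$ is dominated by a term of $\Theta_R(P_i,y_i,s)\le\Theta^1(P_i,y_i,s)$.

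Granting the claim, the configuration $\{\hat P',\hat Y'\}$ is feasible on $P'$ with $j\le k$ parts and, by Eq.~\ref{eq:k-sink-evac}, cost $\max_i\Theta^1(P_i',y_i',s)\le\max_{1\le i\le k}\Theta^1(P_i,y_i,s)=\Theta^k_{\mathrm{opt}}(P,s)$. To get down to exactly $k$ parts I would repeatedly split some part at an interior vertex and add a sink there; each split is just an instance of the claim (keep the old sink in the subpart that contains it, place the near-endpoint sink in the other), so it never raises the maximum over parts. Hence $P'$ admits a $k$-partition of cost at most $\Theta^k_{\mathrm{opt}}(P,s)$, proving the lemma. (If $P'$ has fewer than $k$ vertices one reads $\Theta^k_{\mathrm{opt}}(P',s)$ as using every vertex of $P'$ as a sink, and the same argument gives the bound.)

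I expect the one delicate point to be the second case of the claim: once the inherited sink $y_i$ lies outside $P_i'$, re-placing it at the near endpoint of $P_i'$ must be justified by a careful term-by-term comparison against the \emph{one-sided} evacuation expression (checking both the travel-time factors and the weight-sum factors), and the one-vertex subparts, where the relevant maximum is over the empty set, have to be treated separately. Everything else is bookkeeping about intervals and positivity of weights.
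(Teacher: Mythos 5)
Your proposal is correct and follows essentially the same route as the paper: the paper's (very terse) proof simply restricts the optimal $k$-partition of $P$ to the subpath $P'$, yielding at most $k$ parts of no greater cost, which is exactly your construction. You have merely filled in the details the paper leaves implicit --- the term-by-term comparison of the evacuation maxima, the re-placement of a sink that falls outside its truncated part, and the padding back up to $k$ parts --- all of which are handled correctly.
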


\subsection{Recurrence for Optimal $k$-Sink Evacuation Time}
We are going to present a recurrence for the optimal $k$-sink evacuation time given a scenario $s\in\mathcal{S}$. Although we prove the recurrence here, an alternative rigorous derivation based on earlier definitions can be found in Appendix \ref{app:evac-recur-derivation}.

Let $T(q,i)$ be the optimal evacuation time considering only the first $q$ parts restricted to the subpath from $x_0$ to $x_i$. Let $w_{ji}$ be the optimal $1$-sink evacuation time in the subpath from $x_j$ to $x_i$. Consider the term:
$$\min_{0\leq j\leq i}\max\left(T(q-1,j-1),w_{ji}\right)$$

For any particular value of $j$, this formula models the $q$-sink evacuation time restricted to the subpath from $x_0$ to $x_i$, with the left end of the $q^{th}$ part being $x_j$, and the first $q-1$ parts being placed optimally for evacuating the subpath from $x_0$ to $x_{j-1}$. The term is minimized over all possible values of $j$, the left end of the $q^{th}$ part.

Since $T(q,i)$ is the {\em optimal} solution,
\begin{equation}\label{eq:k-sink-recur-leq}
T(q,i) \leq \min_{0\leq j\leq i}\max\left(T(q-1,j-1),w_{ji}\right)
\end{equation}

Under scenario $s\in\mathcal{S}$, let us assume that in the optimal $q$-sink placement in the subpath from $x_0$ to $x_i$, the left end of the $q^{th}$ part is $x_{j^{\prime}}$, and the first $q-1$ parts/sinks are given by $\hat{P}_{q-1} = \{P_1,P_2,...,P_{q-1}\}$ and $\hat{Y}_{q-1} = \{y_1,y_2,...,y_{q-1}\}$, the right end of the $(q-1)^{th}$ part being $x_{j^{\prime}-1}$. Let $\Theta^{q-1}([x_0,x_{j^{\prime}-1}],\{\hat{P}_{q-1},\hat{Y}_{q-1}\})$ denotes the $(q-1)$-sink evacuation time under scenario $s$ considering the parts/sinks $\hat{P}_{q-1}$ and $\hat{Y}_{q-1}$ in the subpath from $x_0$ to $x_{j^{\prime}-1}$. Since $T(q-1,j^{\prime}-1)$ is the optimal(minimum) $(q-1)$-sink evacuation time in the subpath from $x_0$ to $x_{j^{\prime}-1}$,
\begin{equation} \label{eq:opt-q-1}
T(q-1,j^{\prime}-1) \leq  \Theta^{q-1}([x_0,x_{j^{\prime}-1}],\{\hat{P}_{q-1},\hat{Y}_{q-1}\})
\end{equation}

Now,
\begin{align}
T(q,i) & = \max\left(\Theta^{q-1}([x_0,x_{j^{\prime}-1}],\{\hat{P}_{q-1},\hat{Y}_{q-1}\}),w_{j^{\prime}i}\right) \nonumber && \\
& \geq \max\left(T(q-1,j^{\prime}-1),w_{j^{\prime}i}\right) \nonumber && \text{(from Eq. \ref{eq:opt-q-1})}\\
& \geq \min_{0\leq j\leq i}\max\left(T(q-1,j-1),w_{ji}\right) \label{eq:k-sink-recur-geq} &&
\end{align}

From Eq. \ref{eq:k-sink-recur-leq} and \ref{eq:k-sink-recur-geq}, we see that 
\begin{equation}\label{eq:k-sink-recur1}
T(q,i) = \min_{0\leq j\leq i}\max\left(T(q-1,j-1),w_{ji}\right)
\end{equation}
where $w_{ji}$ is the optimal $1$-sink evacuation time in the subpath from $x_j$ to $x_i$. There can be multiple values of $j$ (the left end of the $q^{th}$ part) for which $T(q,i)$ is minimized. We will always assume without loss of generality that optimal $j$ is the largest of such values. In other words, the location of the optimal left-end of the $q^{th}$ part is the rightmost possible.

\subsection{A DP Algorithm for Optimal $k$-Sink Location Problem}
In this section we provide a DP algorithm to implement the recurrence given in Eq. \ref{eq:k-sink-recur1}. Note that the DP table for the recurrence will have $O(kn)$ entries. The row index of the DP table corresponds to $q$,  the number of sinks, and the column index $i$ corresponds to $x_i$, the right end-point of the subpath being considered. We fill the table row-by-row so that when we are calculating $T(q,i)$, the values of $T(q-1,j-1)$ (for $0\leq j\leq i$) are already present in the DP table. Again, we note that there can be mutiple values of $j$ that minimize $T(q,i)$ in Eq. \ref{eq:k-sink-recur1} and so, we always choose and refer to the largest possible such value.

We observe two crucial properties of the recurrence which will help us reduce the running time of the DP:

\begin{property} \label{prop:dp-1}
In the recurrence given by Eq. \ref{eq:k-sink-recur1}, if we increment $i$ keeping $q$ fixed, then the minimizing $j$ value cannot decrease.
\end{property}
\begin{proof}
See Appendix \ref{app:prop-dp-1}
\end{proof}

\begin{property} \label{prop:dp-2}
In the recurrence given by Equation \ref{eq:k-sink-recur1} for fixed $q$ and $i$, \\
$\max\left(T(q-1,j-1),w_{ji}\right)$ is unimodal with a unique minimum value as a function of $j$.
\end{property}
\begin{proof}
From Lemma \ref{lem:opt-k-sink-path-subpath}, if we increase(resp. decrease) the size of a path, then the optimal $k$-sink evacuation time cannot decrease(resp. cannot increase). Thus, when we increase $j$, $T(q-1,j-1)$(the optimal $(q-1)$-sink evacuation time on the subpath from $x_0$ to $x_{j-1}$) cannot decrease and $w_{ji}$(the optimal $1$-sink evacuation time on the subpath from $x_j$ to $x_i$) cannot increase. Thus, $\max\left(T(q-1,j-1),w_{ji}\right)$ will have a unique minimum value as a function $j$.\qed
\end{proof}

\begin{remark} \label{rem:faster-dp}
Properties \ref{prop:dp-1} and \ref{prop:dp-2} stated above imply that if we know the minimizing $j$ value for $T(q,i-1)$ (say $j^{\prime}$) then the minimizing $j$ value for $T(q,i)$ can be found by scanning linearly from $j^{\prime}$, stopping when we find the minimum of the unimodal function $\max\left(T(q-1,j-1),w_{ji}\right)$. Note that we will have to overshoot the scanning of $j$ by 1 node in order to determine the minimum of the unimodal function $\max\left(T(q-1,j-1),w_{ji}\right)$.
\end{remark}

The recurrence can be re-written as:
\begin{equation}
T(q,i) = \min_{j^{\prime}\leq j\leq i}\max\left(T(q-1,j-1),w_{ji}\right)
\end{equation}
where $j^{\prime}$ is the minimizing value of $j$ for $T(q,i-1)$.

The recurrence requires the maintenance of $w_{ji}$, the optimal $1$-sink evacuation time in the subpath from $x_j$ to $x_i$ as $i$ and $j$ increases. In Appendix \ref{sec:wji-maintenance}, we introduce a new data structure to maintain $w_{ji}$ as $i$ and $j$ are incremented. Each increment of $i$ or $j$ can be handled in $O(\log n)$ time.

The DP table filling procedure is given below.:
\begin{enumerate}
  \item Fill in $T(1,i) = w_{0i}$ (for $0\leq i\leq n$), the optimal $1$-sink evacuation time in the subpath from $x_0$ to $x_i$. ($O(n\log n)$ time)
  \item For $q\leftarrow 2$ to $k$:
    \begin{itemize}
      \item $j\leftarrow 0$
    \end{itemize}
    \begin{enumerate}
      \item For $i\leftarrow 0$ to $n$:
        \begin{itemize}
          \item Update $w_{ji}$
          \item Do $j\leftarrow j+1$ and update $w_{ji}$ till minimum of $\max(T(q-1,j-1),w_{ji})$ is found which is equal to $T(q,i)$. Note that we will have to overshoot the scanning of $j$ by 1 in order to determine the minima.
          \item Store $T(q,i)$ in the DP table along with the minimizing $j$ value and the optimal sink for subpath $[x_j,x_i]$ (used to reconstruct the optimal $k$-partition and sinks.)
        \end{itemize}
    \end{enumerate}
\end{enumerate}

As we can see by the DP procedure, for every $q$ value, $i$ is incremented at most $n$ times and $j$ at most $2n$ times (because of the overshooting by 1). Thus, $w_{ij}$ is updated atmost $O(n)$ times for every $q$ value and every update incurs a cost of $O(\log n)$ to maintain $w_{ji}$. Thus, we are able to fill the $O(kn)$ DP table entries in $O(kn\log n)$ time.

\subsection{Maintenance of $w_{ji}$}
\label{sec:wji-maintenance}
In the DP algorithm, we also need to maintain $w_{ji}$, the optimal $1$-sink evacuation time, as $j$ and $i$ increases. Therefore any data structure which maintains $w_{ji}$ must be able to maintain the maximum of the \emph{evacuation function} of every node defined in the Eqs. \ref{eq:left-evac-c} and \ref{eq:right-evac-c}, and support the following operations:
\begin{enumerate}
\item Return the optimal $1$-sink evacuation time in subpath $\{x_j,...,x_i\}$.
\item When $i\leftarrow i+1$, then append a new node to the right of the path. 
\item When $j\leftarrow j+1$, then delete the leftmost node from the path.
\end{enumerate}

We introduce a new data structure which we call a \emph{Bi-Heap} which can maintain $w_{ji}$ and perform all operations in $O(\log n)$ time. For details of the data structure, refer Appendix \ref{app:bi-heap}.

\subsection{Running time}
We are filling $O(kn)$ DP entries and each takes $O(\log n)$ time (for maintaining $w_{ji}$.) Thus the overall running time of the algorithm is $O(kn\log n)$.

\section{Proofs for the Optimal $k$-Sink Evacuation}
\subsection{Proof of Lemma \ref{lem:opt-k-sink-path-subpath}}
\label{app:opt-k-sink-path-subpath}
Assume a path $P$ and a subpath $P^{\prime}$ of the path $P$. Now, given a scenario, if the optimal $k$-sink evacuation time in $P^{\prime}$ is greater than the optimal $k$-sink evacuation time in $P$, then we can use the $k$-partition used in the optimal solution for $P$ on $P^{\prime}$ ($\leq k$ parts) and obtain a better solution than the optimal one in $P^{\prime}$. Therefore, the optimal $k$-sink evacuation time in $P$ is greater than or equal to the optimal $k$-sink evacuation time in any of its subpaths $P^{\prime}$.\qed

\subsection{A Derivation for the Optimal $k$-Sink Evacuation  Recurrence}
\label{app:evac-recur-derivation}
The $k$-sink evacuation time is given by the following equation:
\begin{align}
\Theta^k(P,\{\hat{P},\hat{Y}\},s) & = \max_{1\leq i\leq k} \Theta^1(P_{i},y_i,s) \nonumber && \text{(from Eq. \ref{eq:k-sink-evac})}\\
& = \max\left(\max_{1\leq i\leq k-1} \left\{\Theta^1(P_{i},y_i,s)\right\},\Theta^1(P_k,y_k,s)\right) && \\
& = \max\left(\Theta^{k-1}(P-P_k,\{\hat{P}-P_k,\hat{Y}-y_k\},s),\Theta^1(P_k,y_k,s)\right) \label {eq:k-sink-evac1} &&
\end{align}
where $P-P_k$ denotes the path excluding the $k^{th}$ part and $\hat{P}-P_k$ denotes the set $\{P_1,P_2,...,P_{k-1}\}$.

Thus the optimal $k$-sink evacuation time can be written as:
\begin{align}
\Theta^k_{\mathrm{opt}}(P,s) & = \min_{\{\hat{P},\hat{Y}\}} \Theta^k(P,\{\hat{P},\hat{Y}\},s) \nonumber && \\
& = \min_{\{\hat{P},\hat{Y}\}} \max\left(\Theta^{k-1}(P-P_k,\{\hat{P}-P_k,\hat{Y}-y_k\},s),\Theta^1(P_k,y_k,s)\right) \nonumber && \text{(from Eq. \ref{eq:k-sink-evac1})}\\
& = \min_{\{P_k,y_k\}} \max\left(\min_{\{\hat{P}-P_k,\hat{Y}-y_k\}}\Theta^{k-1}(P-P_k,\{\hat{P}-P_k,\hat{Y}-y_k\},s),\Theta^1(P_k,y_k,s)\right) \nonumber && \\
& = \min_{\{P_k,y_k\}} \max\left(\Theta^{k-1}_{\mathrm{opt}}(P-P_k,s),\Theta^1(P_k,y_k,s)\right) \nonumber && \\
& = \min_{P_k} \max\left(\Theta^{k-1}_{\mathrm{opt}}(P-P_k,s),\min_{y_k}\Theta^1(P_k,y_k,s)\right) \nonumber && \\
& = \min_{P_k} \max\left(\Theta^{k-1}_{\mathrm{opt}}(P-P_k,s),\Theta^1_{\mathrm{opt}}(P_k,s)\right) \label{eq:k-sink-recur} &&
\end{align}

As we can see from Eq. \ref{eq:k-sink-recur}, a natural recurrence relation exists for the optimal $k$-sink evacuation time.

Let $T(q,i)$ be evacuation time considering only the first $q$ parts and only in the subpath from $x_0$ to $x_i$. From Eq. \ref{eq:k-sink-recur}, we can write $T(q,i)$ as:
\begin{equation*}
T(q,i) = \min_{0\leq j\leq i}\max\left(T(q-1,j-1),w_{ji}\right)
\end{equation*}
where $w_{ji}$ is the optimal $1$-sink evacuation time in the subpath from $x_j$ to $x_i$. There can be multiple values of $j$ (the left end of the last part) for which $T(q,i)$ is minimized. We assume w.l.g that the largest value of minimizing $j$ is always chosen and meant when we say minimizing $j$ value.

\subsection{Proof of Property \ref{prop:dp-1}}
\label{app:prop-dp-1}

Equation \ref{eq:k-sink-recur1} is given by:
\begin{equation}
T(q,i) = \min_{0\leq j\leq i}\max\left(T(q-1,j-1),w_{ji}\right)
\end{equation}
where $w_{ji}$ is the optimal $1$-sink evacuation time in the subpath from $x_j$ to $x_i$.

Assume $j^{\prime}$ is the  minimizing $j$ value for $T(q,i-1)$ and $j^{\prime\prime}$ is the minimizing $j$ value for $T(q,i)$. We need to prove that $j^{\prime\prime}\geq j^{\prime}$.

Assume the contrary, i.e., $j^{\prime\prime}<j^{\prime}$. We now have two cases to deal with:\\\\
\textbf{Case 1}: $w_{j^{\prime}(i-1)} \geq T(q-1,j^{\prime}-1)$

In this case, the $q$-sink evacuation time in the subpath from $x_0$ to $x_{i-1}$ will be $T(q,i-1) = w_{j^{\prime}(i-1)}$. Now,
\begin{equation*}
T(q,i) = \max\left(T(q-1,j^{\prime\prime}-1),w_{j^{\prime\prime}i}\right)
\end{equation*}

\begin{remark} \label{rem:path-subpath-evac-time}
The optimal $k$-sink evacuation time of a path is greater than or equal to the $k$-sink evacuation time of any of its subpaths. (From Lemma \ref{lem:opt-k-sink-path-subpath})\end{remark}

Since $j^{\prime\prime} < j^{\prime}$, 
\begin{align}
T(q-1,j^{\prime\prime}-1) & \leq T(q-1,j^{\prime}-1) \nonumber && \text{(from Remark \ref{rem:path-subpath-evac-time})}\\
 & \leq w_{j^{\prime}(i-1)} \nonumber && \text{(from case assumption)} \\
 & \leq w_{j^{\prime}i} \nonumber && \text{(from Remark \ref{rem:path-subpath-evac-time})}\\
& \leq w_{j^{\prime\prime}i} && \text{(from Remark \ref{rem:path-subpath-evac-time})} \label{eq:prop-1-1}
\end{align}

Therefore from Eq. \ref{eq:prop-1-1}, $T(q,i)=w_{j^{\prime\prime}i}$. Since $j^{\prime\prime}$ is the largest minimizing $j$ value for $T(q,i)$ and $j^{\prime} > j^{\prime\prime}$,
\begin{align}
w_{j^{\prime\prime}i} & < \max\left(T(q-1,j^{\prime}-1),w_{j^{\prime}i}\right) && \label{eq:prop-1-2} \\
& = w_{j^{\prime}i}
\end{align}
But from Remark \ref{rem:path-subpath-evac-time}, we know that the optimal $1$-sink evacuation time of the path $x_{j^{\prime\prime}}...x_i$ cannot be strictly lesser than the $1$-sink evacuation time of one of its subpaths $x_{j^{\prime}}...x_i$. Eq. \ref{eq:prop-1-2} cannot be true. We arrive at a contradiction. Therefore, our assumption that $j^{\prime\prime} < j^{\prime}$ is false for this case. So,  $j^{\prime\prime}\geq j^{\prime}$.\\\\
\textbf{Case 2}: $w_{j^{\prime}(i-1)} < T(q-1,j^{\prime}-1)$

In this case, the $q$-sink evacuation time in the subpath from $x_0$ to $x_{i-1}$ will be $T(q,i-1) = T(q-1,j^{\prime}-1)$. The minimizing $j$ value for $T(q,i)$ is $j^{\prime\prime}$,
\begin{equation*}
T(q,i) = \max\left(T(q-1,j^{\prime\prime}-1),w_{j^{\prime\prime}i}\right)
\end{equation*}

Since  $j^{\prime\prime}$ is the largest minimizing $j$ value for $T(q,i)$, and $j^{\prime} > j^{\prime\prime}$,
\begin{equation} \label{eq:prop-1-3}
\max\left(T(q-1,j^{\prime\prime}-1),w_{j^{\prime\prime}i}\right) < \max\left(T(q-1,j^{\prime}-1),w_{j^{\prime}i}\right)
\end{equation}

By the case assumption and the fact that $j^{\prime}$ is the minimizing $j$ value for $T(q,i-1)$,
\begin{align}
w_{j^{\prime}(i-1)} & < T(q-1,j^{\prime}-1) \nonumber && \text{(from case assumption)}\\
& \leq \max(T(q-1,j^{\prime\prime}-1),w_{j^{\prime\prime}(i-1)}) \nonumber && \\
& \leq \max(T(q-1,j^{\prime\prime}-1),w_{j^{\prime\prime}i}) \label{eq:prop-1-4} && \text{(from Remark \ref{rem:path-subpath-evac-time})}
\end{align}
Therefore,
\begin{align}
T(q-1,j^{\prime}-1) & \leq \max(T(q-1,j^{\prime\prime}-1),w_{j^{\prime\prime}i}) \label{eq:prop-1-5} && \text{(from Eq. \ref{eq:prop-1-4})}\\
w_{j^{\prime}i} & \leq w_{j^{\prime\prime}i} && \label{eq:prop-1-6} \text{(from Remark \ref{rem:path-subpath-evac-time})}
\end{align}

From Eqs. \ref{eq:prop-1-5} and \ref{eq:prop-1-6},
\begin{equation}\label{eq:prop-1-7}
\max\left(T(q-1,j^{\prime}-1),w_{j^{\prime}i}\right) \leq \max(T(q-1,j^{\prime\prime}-1),w_{j^{\prime\prime}i})
\end{equation}

We see that Eqs. \ref{eq:prop-1-3} and \ref{eq:prop-1-7} contradict each other. Therefore, our assumption that $j^{\prime\prime} < j^{\prime}$ is false for this case. So, $j^{\prime\prime} \geq j^{\prime}$. \qed

\section{The $Bi-Heap$ Data Structure}
\label{app:bi-heap}
We introduce this data structure to help maintain $w_{ji}$, the optimal $1$-sink evacuation time in the subpath from $x_j$ to $x_i$ as $j$ and $i$ are incremented in $O(\log n)$ time. This maintenance can be done efficiently by combining together various basic data structues, e.g., heaps and 2-3 trees.  Whle not difficult, the details are quite technical.

\begin{remark} \label{rem:1-sink-evac-opt}
In the subpath from $x_j$ to $x_i$, the position of the optimal $1$-sink cannot move left if $j$ and/or $i$ are incremented.
\end{remark}

\subsection{Setup}
For any node $x_i (0\leq i < n)$, set $\ell_i=x_{i+1}-x_i$ to be the distance from node $i$ to $i+1$  and $w_i>0$ be the number of people at node $i$ who need to be evacuated.

Let $\ell_{i,j} = x_j - x_i$ and $W_{i,j} =\sum_{i \le t \le j} w_{t}$. Suppose all edges have fixed uniform capacity $c.$ For $i < j$ the time to evacuate the nodes in $[i,j]$ to node $j$ is  $E_{i,j} - 1$ and the time to evacuate the nodes in $[i,j]$ to node $j$ is  $E'_{i,j} - 1$ where 

\begin{equation}
E_{i,j} = \max_{i \le t < j} 
\left(
 \left \lceil \frac {W_{i,t}} c  \right \rceil  + \ell_{t,j}  \right),
 \quad \quad
 E'_{i,j} = \max_{i <  t \le j} 
\left(
 \left \lceil \frac {W_{t,j}} c  \right \rceil  + \ell_{i,t}  \right) 
\end{equation}

In order to maintain the optimal $1$-sink evacuation time for the last part, we maintain the left evacuation time $E_{i,j}$ and the right evacuation time $E^{\prime}_{i,j}$ for the optimal sink. Note that by Remark \ref{rem:1-sink-evac-opt}, the optimal sink in the last part cannot move to the left. This implies that we must be able to handle the updates from $E_{i,j}$ to $E_{i+1,j}$ or $E_{i,j+1}$ and from $E^{\prime}_{i,j}$ to $E^{\prime}_{i+1,j}$ or $E^{\prime}_{i,j+1}$.

One method of handling these updates is doing a $O(n^2)$ preprocessing step that precalculates {\em all} possible $E_{i,j}$ and $E'_{i,j}$ values for later use, using $O(n^2)$ space.

Another approach, the one that will be outlined here, is to create a $O(n)$ size data structure that permits reusing old information about $E_{i,j}$ or $E'_{i,j}$ to calculate the value of
$E_{i+1,j}$,  $E_{i,j+1},$,  $E'_{i+1,j}$,  or $E'_{i,j+1}$ in $O(\log n)$ time.

\subsection{Some observations}

Suppose $i < j$ are given and  $j$ is increased by $1$.  Note that
\begin{eqnarray*}
E_{i,j+1} &=& \max_{i \le t < j+1} 
\left(
 \left \lceil \frac {W_{i,t}} c  \right \rceil  + \ell_{t,j+1}  
 \right)\\
 &=& \max \left(
 \max_{i \le t < j} 
\left(
 \left \lceil \frac {W_{i,t}} c  \right \rceil  + \ell_{t,j}  + \ell_j
 \right),
 \left \lceil \frac {w_j} c \right \rceil + \ell_j
\right)\\
\end{eqnarray*}

That is, the maximum is taken over {\em almost} the same set of items as $E_{i,j}$ except that $\ell_j$ is added to all of the old items and one new item is added.

If $i$ is increased by one then 

\begin{eqnarray*}
E_{i+1,j} &=& \max_{i+1 \le t < j} 
\left(
 \left \lceil \frac {W_{i+1,t}} c  \right \rceil  + \ell_{t,j}  
 \right)\\
 &=&  \max_{i+1 \le t < j} 
\left(
 \left \lceil \frac {W_{i,t}-w_i} c  \right \rceil  + \ell_{t,j} 
 \right)
\end{eqnarray*}

The maximum here is again taken over {\em almost} the same set of items as $E_{i,j}$ except that one item is removed from the set and $w_i$ is subtracted from all of the $W_t$.

We now examine $E'_{i,j}$ and find that the situation is very similar.

If $j$ is increased by $1$ then
\begin{eqnarray*}
 E'_{i,j+1} &=& \max_{i <  t \le j+1} 
\left(
 \left \lceil \frac {W_{t,j+1}} c  \right \rceil  + \ell_{i,t}  \right)\\
 &=& \max \left(
 \max_{i < t \le j} 
\left(
 \left \lceil \frac {W_{t,j}+w_{j+1}} c  \right \rceil  + \ell_{i,t}
 \right),
 \left \lceil \frac {w_{j+1}} c \right \rceil + \ell_{i,j+1}
\right)\\
\end{eqnarray*}

The maximum is taken over {\em almost} the same set of items as $E'_{i,j}$ except that $w_{j+1}$ is added to all of the old items and one new item is added.

If $i$ increases by 1 we get
\begin{eqnarray*}
 E'_{i+1,j} &=& \max_{i+1 <  t \le j} 
\left(
 \left \lceil \frac {W_{t,j}} c  \right \rceil  + \ell_{i+1,t}  \right)\\	
 &=& \max_{i+1 <  t \le j} 
\left(
 \left \lceil \frac {W_{t,j}} c  \right \rceil  + \ell_{i,t} - \ell_{i+1}  \right)
\end{eqnarray*}
where the maximum is now taken over {\em almost} the same set of items as $E'_{i,j}$ except that one item is removed from the set and $\ell_{i+1}$ is subtracted from all of the $\ell_{i,t}.$

\subsection{The {\em Bi-Heap} Data Structure}
In order to simplify the exposition we define.
$$\Cost{W}{L} = \left\lceil \frac {W} c \right\rceil + L.$$

Following the observations in the previous section we note that to solve the problem it suffices to implement a data structure that maintains a set of pairs
$Z = \{(W_i,L_i)\}$, allowing the following operations on the set, with each operation requiring only $O(\log n)$ time:
\begin{enumerate}
\item $MAX$: returns $\max \left\{ \Cost{W_i}{L_i}\, :\, (W_i,L_i) \in Z\right\}$
\item  $AddW(w):$ In  every $(W_i,L_i) \in Z$, replace $W_i$ by $W_i +w$.
\item  $AddL(\ell):$ In every $(W_i,L_i) \in Z$, replace $L_i$ by $L_i +\ell$.
\item  $Insert((W,L)):$ Insert new pair $(W,L)$ into $Z$
\item  $Delete((W,L):$ Remove pair $(W,L)$ from $Z$.  Note that the input here will be a pointer to the current location of $(W,L)$ in $Z$.
\end{enumerate}

We will call such a data  structure a 
{\em Bi-Heap}.
$E_{i,j}$ can be evaluated by maintaining a Bi-Heap on the appropriate $(W,L)$ pairs and calculating $E_{i,j}$ via the MAX function.  $E_{i,j}$ can be updated to  $E_{i,j+1}$ in 
$O(\log n)$ time by first performing a $AddL(\ell_j)$ operation and then an $Insert((w_j,\ell_j))$ one. 
The value of $E_{i,j+1}$ is then found by calling $MAX$.
 A similar observation permits updating in $O(\log n)$ time  from $E_{i,j}$ to $E_{i+1,j}$, 
 and from $E'_{i,j}$ to
 $E'_{i,j+1}$ or $E'_{i+1,j}.$
 
We will now see how to implement  such a data structure.

At time $t$ define $\bar w_t = w$ if the current operation is $AddW(w)$;  define $\bar \ell_t = \ell$
if the current operation is $AddL(\ell).$  Otherwise, set $\bar w_t = 0$ and $\bar \ell_t = 0.$
Define $\bar W_t = \sum_{t' \le t} \bar w_{t'}$ and $\bar L_t = \sum_{t'\le t} \bar \ell_{t'}.$
Our algorithm will keep the current value of $\bar W_t$ in a variable $\bar W$ and the current value of
$\bar L_t$ in a variable $\bar L.$ This can be maintained in $O(1)$ time per step.

The main issue in designing the data structure is dealing with the $\lceil W/ c\rceil$ term in the cost function.
Observe that $\Cost{W_i}{L_i+\ell}= \Cost{W_i}{L_i} + \ell$. Thus $AddL(\ell)$ does not change the  
relative ordering of the items in $Z$.
If $\Cost{W}{L}$ was actually defined by $\frac W c + L$ then $\Cost{W_i+w}{L}= \Cost{W_i}{L_i} + \frac w c$ 
so $AddW(w)$ would also not change the  relative ordering of the items in $Z$.  This means that $Z$ could be maintained by a priority queue in $O(\log n)$ time per operation.  The only subtle point  is that the  priority queue would {\em not} store the actual current values $(W_i,L_i)$ in the data structure.
Let $t'$ be the time at which item $(W,L)$ was inserted. The priority queue will store $W,L$ {\em and}
the associated values    $\bar W_{t'}, \bar N_{t'}$. The value of the pair in the pair in the 
priority queue at the current time is
$$(W+\bar W - \bar W_{t'},  L + \bar L - \bar L_{t'})$$
which  can be calculated on the fly in $O(1)$ time.

\medskip

The reason this idea does not fully work for the actual problem is  that with the real definition
$$\Cost{W}{L} = \left\lceil \frac {W} c \right\rceil + L$$
the relative ordering of the cost function is {\em not} maintained under $AddW(w)$.  More specifically, it is not difficult to find examples when $\Cost{W}{L} < \Cost{W'}{L'}$ but $\Cost{W+w}{L} > \Cost{W'+w}{L'}$.  So, $AddW(w)$ operations can change the relative ordering of the items in $Z$ under Cost. It is this complication that requires augmenting the data structure.

This issue can be  addressed by first  partitioning  the $(W_i,L_i)$ pairs into (at most) $c$ subsets $Z_d,$ $0 \le d < c$ where
$$Z_d = \left\{(W_i,L_i) \in Z \,:\,  W_i \equiv d \bmod c \right\} $$

Note that 
$$\max \left\{  \Cost{W_i}{L_i}\, :\, (W_i,L_i) \in Z\right\}
= \max_{0 \le d < c}  R_d 
$$
where
$$ R_d = \left\{ \Cost{W_i} {L_i}\, :\, (W_i,L_i) \in Z_d\right\}.$$
The idea is, to identify, for each $Z_d$,  the item with largest cost, $R_d$, and only maintain  the maximum over those representative values and not over the full set of items. These $R_d$ will be kept in an auxiliary max-Heap $H$.

Next note that, if $W \equiv W' \bmod c$, then 
$$\left \lceil 
\frac {W + w} c 
\right \rceil 
-
\left \lceil 
\frac {W} c 
\right \rceil 
= 
\left \lceil 
\frac {W' + w} c 
\right \rceil
-
\left \lceil 
\frac {W'} c 
\right \rceil.
$$
So, by definition, for fixed $d$,  the items in a fixed $Z_d$ stay in the same cost sorted order after $AddW$ and $AddL$ operations.  Thus, the representative of $Z_d$ remains the same after those operations.
 
Our algorithm will keep each set $Z_d$ in its own max-heap. Note that there is some ambiguity in the labelling of  $Z_d.$  Suppose that$(W,L) \in Z_d$.
After an $AddW(w)$ operation $W$ is replaced by $W + w$ so the pair is now in $Z_{d'}$ where
$$d' = (W+w) \bmod c = (d+ w) \bmod c.$$  
We therefore label  the Max-Heaps  by the $d$ values they would have at time $t=0.$  More specifically,  we define $H_0,H_1,\ldots, H_{c-1}$, only explicitly storing the non-empty ones. At time $t$, $Z_d$ will be stored in
$H_{d'}$ where $d' + \bar W_t \equiv d \bmod c.$

We now define the pieces of our data structure

\begin{enumerate}
\item $\bar W, \bar L$.  These values are set to the current values $\bar W  = \bar W_t$, and $\bar L =  \bar L_t:$ 
\item Max heaps  $H_0,H_1,\ldots, H_{c-1}$ storing  the subsets $Z_d$
	\begin{enumerate}
			\item Implemented using a binary-tree heap data structure ordered using 
		    cost  $\left\lceil  {W_i} /c \right\rceil + L_i$. Allows identifying the max value, insertion and deletion of an element in $O(\log n)$ time, where $n$ is the number of elements in the heap.		
	\item Only non-empty $Z_d$ would be stored with one max-heap for each nonempty $Z_d$
		\item $Z_d$ is stored in Heap $H_{d'}$ where $d'= (d - \bar W_t) \bmod c$.
		\item Each  entry in a  $H_{d'}$  contains the $(W,L)$ values of the pair 
			{\em at the time of insertion}
		  			as well as the values $\bar W_t, \bar L_t$ {\em at the time of insertion}.
		\item Identifies the largest cost item in $Z_d$ as $R_d$.  This will also be labelled as $r_{d'}$
	\end{enumerate}
\item A dictionary $D$ allowing access to the existing $H_{d'}$  heaps by their labels.  
	\begin{enumerate}
		\item implemented using a binary tree
	\end{enumerate}
\item A special type of max  Heap  $H$ that stores the (maximum cost) representative from each non-empty $Z_d$.
		\begin{enumerate}
			\item Stores the representatives, $r_{d'}$  of the non-empty $H_{d'}$ heaps
			\item Representatives $r_{d'}$ are stored with their index $d'$ and a pointer to the entry 
				they represent in $H_{d'}$. 
			\item Permits calculating MAX (of the representatives) in $O(1)$ time and  
			inserting and deleting items from $H$ in $O(\log n)$ time, where $n$ is the current number of items
			 in $H$.
			\item Has one more special operation to be described below.
		\end{enumerate}
\end{enumerate}

$H$ will be implemented by using a $2$-$3$ tree with the leaves of the tree being the 
$r_{d'}$. In a $2$-$3$ tree, all leaves are on the same depth 
 so the height of the tree is $O(\log n).$   

 The actual cost of a leaf  is not stored (since it is always changing) but can be calculated in $O(1)$ time from its index $d'.$

The  leaves of $H$ are sorted from left to right by increasing index $d'$. The leaf descendants
of any internal node $v$ therefore form a continuous interval of leaf nodes.  We let $I(v)$ denote
this interval.

Internal node $v$  contain two pieces of information in addition to pointers to its (2 or 3) children and its parent: the {\em left endpoint} which is the smallest index of a representative in $I(v)$  and the {\em max cost index} which is the index of the largest cost item in $I(v)$.  The information in an internal node  can therefore be calculated from the information in its (2 or 3) children; the left endpoint  is the smallest of the left endpoints  of  its (2 or 3) children and  the max cost node is the max cost index of its (2 or 3) children that has the largest cost.
 Note that the max cost index in the root $r$ is the max cost node in $I(r)$ and is therefore
 the max cost item in $Z$. 

Considering $H$ as a max heap on its leaves, standard techniques permit inserting a new leaf or deleting an old one in $O(n)$ time.
Furthermore, note that if ALL the values of all of the leaves in the tree are changed by the same amount $\ell$ then the tree does not change at all.  
\medskip

We can now complete the description of the Bi-Heap and its operations.

\medskip

\par\noindent\underline{\bf MAX:} Simply report the max cost node in $H$.

\medskip

\par\noindent\underline{\bf  $ \bf Insert(W,L)$ or $\bf Delete(W,L)$:}\\

Before doing anything else  first calculate $d = (W \bmod C)$ to identify the $Z_d$ to  which $(W,L)$ belongs and $d'= (d - \bar W_t) \bmod c$
to identify the $H_{d'}$ which stores $Z_d.$

For $Insert$, if $Z_d$ is empty and no such $H_{d'}$ exists, create it and insert $H_{d'}$  into $D$.  

Now insert $(W,L)$ into $H_{d'}$.
If this new item is  {\em not} the largest cost item in $H_{d'}$, stop.  
If it is, then delete  the current $d'$-representative from $H$ and Insert $(W,L)$ into $H$ as the newest $d'$-representative.

For $Delete$ first check if $(W,L)$ is the largest item in $H_{d'}$.  If it is not, then just delete it from $H_{d'}$ and stop. If $H_{d'}$ is now empty, delete it from $D$ 

If $(W,L)$ is the largest item in $H_{d'}$  then it is also the $d'$-representative in $H$.  Delete it from $H$ and $H_{d'}$.  
If $H_{d'}$ is now empty, delete it from $D$ If $H_{d'}$ is not empty then insert the largest remaining item in $H_{d'}$ into $H$ as the $d'$-representative.

\medskip
\par\noindent\underline{\bf $\bf AddL(\ell)$:}

 $AddL(\ell)$ operations only change the actual values of the items but not their relative orderings.  So, nothing needs to be changed. The change in $\ell$ value is taken care of by 
 setting $\bar L = \bar L + \ell.$
 
\medskip
\par\noindent\underline{\bf $\bf AddW(w)$:}
First set $\bar W = \bar W + w.$.

We assume $w > 0$  ($w < 0$ is very similar). 
Note  that for each fixed $d$, the ordering of the items in $Z_d$  remains invariant  after the operation, so the $H_d'$ structures do not change.  The only thing that might change is 
the  relative orderings of the $r_d'$-representatives in $H$.
 
First assume $w \equiv 0 \bmod c$.  Then, for all $(W,L)$, $Cost(W+w,L) = Cost(W,L) + w/c.$  Thus the relative ordering of the $d'$ representatives 
in $H$ remains invariant and nothing further needs to be done.

\medskip

Otherwise, $w = \bar k c + \bar d$ where $\bar k, \bar d$  are  integers  and $\bar d > 0.$  Thus $\lceil w/c \rceil = \bar k+1.$

Now let $(W,L) \in Z_d$.  Then $W = k c + d$ and 
$$ Cost(W,L)=
\left\{
\begin{array}{ll}
k+ L,  &  \mbox{if } d =0\\
k + L + 1,& \mbox{if } d \not = 0
\end{array}
\right..
$$
Thus, if $W \in Z_d$, then
\begin{eqnarray}
Cost(W+w,L)
&=& \left\lceil \frac {kc + \bar k c + d + \bar d} c \right\rceil + L\\
&=&\left\{
\begin{array}{ll}
k+ L + \bar k +1  &  \mbox{if } d+\bar d \le c\\[0.3in]
k + L + 2,& \mbox{if } d + \bar d > c
\end{array}
\right.\\[0.2in]
&=&\left\{
\begin{array}{ll}
Cost(W,L) + \lceil w/c \rceil,   &  \mbox{if } d=0\\
Cost(W,L) + \lceil w/c \rceil  - 1,& \mbox{if } d\not= 0,\, d + \bar d \le c\\
Cost(W,L) + \lceil w/c \rceil, & \mbox{if } d+\bar d > c
\end{array}
\right.\\[0.2in]
&=&\left\{
\begin{array}{ll}
Cost(W,L) + \lceil w/c \rceil -1,   &  \mbox{if }  0 < d \le c - \bar d\\
Cost(W,L) + \lceil w/c \rceil ,& \mbox{otherwise}\\
\end{array}
\right. \label{drange}
\end{eqnarray}

For algorithmic purposes we need to transform $d$ into $d'.$  
Recall that $d = d' + \bar W_t \bmod c$.  Set $x = \bar W_t \bmod c$.  Then $d = d' + x \bmod c$.

There are three cases.  The first is that $x =0.$  In this case $d'=d$ and 
(\ref{drange}) stays the same after substituting $d'$ for $d$.

In the second  $c- \bar d \le x < c$ or equivalently,  $0 < c-x \le \bar d$.  
(\ref{drange}) can now be rewritten as 
\begin{equation}
Cost(W+w,L)
=
\left\{
\begin{array}{ll}
Cost(W,L) + \lceil w/c \rceil -1,   &  \mbox{if }  c-x < d' \le c - \bar d + c -x\\
Cost(W,L) + \lceil w/c \rceil,& \mbox{otherwise}\\
\end{array}
\right.
\end{equation}
The final case is  $0 < x  <c - \bar d$, for which (\ref{drange})  can be rewritten as

\begin{equation}
Cost(W+w,L)
=
\left\{
\begin{array}{ll}
Cost(W,L) + \lceil w/c \rceil,   &  \mbox{if }  c-\bar d - x < d' \le c -x\\
Cost(W,L) + \lceil w/c \rceil -1,& \mbox{otherwise}\\
\end{array}
\right.
\end{equation}

Now define $I_x \subseteq [0,c)$ as follows:

\begin{equation}
I_x = 
\left\{
\begin{array}{ll}
(0,\, c-\bar d]				& \mbox{if } x=0\\
(c-\bar d -x,\, c -x]	& \mbox{if } 0 < x < c - \bar d\\
(c- x,\,  c - \bar d+ c - x] & \mbox{if } c- \bar d \le x < c
\end{array}
\right.
\end{equation}

Suppose that $r_{i'}$ and $r_{j'}$ are any two $r_{d'}$ representatives in $H$ and an 
$AddW(w)$ operation has just been performed.  

If $w \bmod c = 0$ then 
the relative ordering of $r_{i'}$ and $r_{j'}$ does not change so
$H$ remains unchanged and nothing further needs to be done.

Otherwise, let $x = \bar W_t \bmod c$ after the $AddW(w)$ operation. If $r_{i'}$ and $r_{j'}$ are either both in $I_x$ or both outside $I_x$ then they both get changed by the same amount and their relative ordering remains unchanged.  

In $H$ this means   the {\em max cost index} of internal node $v$
can only change if there are two leaf nodes in $I(v)$ such that one of the nodes is
in $I_x$ and one outside of $I_x$, i.e., $I_x \cap I(v) \not= \emptyset.$

We will now see that the number of such $v$ is $O(\log n)$ and their {\em max cost indices} can be updated in $O(\log n)$ total time.

 Set $i$ be the smallest leaf index in $H$ in $I_x$ and $j$ the largest leaf index in $H$ in $I_x$ and set $I = [i,j]$.  Note that $i,j$ can be found in $O(\log n)$ time using tree traversal.  By definition $ I_x \cap I(v) = I \cap I(v)$ so we will
restrict ourselves to examining $I \cap I(v)$.
 Let $h$ be the lowest common ancestor of $i',j'$ in $H$ and $r$ the root of $H$.

\medskip

By the construction of the tree 
it is easy to see that if $I(v) \cap I' \not = \emptyset$ then $v$ must be on (a) the path from $i$ to $h$, (b) the path from $j$ to $h$ or the path from $h$ to $r$.  Since
the tree has height $O(\log n)$ These paths together only contain $O(\log n)$ vertices.  

The algorithm is then to (a)  walk up 
the path from $i'$ to $h$ changing the max-cost values as appropriate (this can be done in $O(1)$ time per step since a node only needs to examine its (2 or 3) children, (b) similarly walk up the path from $j'$ to $h$ and then (c) from $h$ to $r.$ In total, this requires only $O(\log n)$ time.  After these walks are finished all internal nodes contain the correct max-cost values, so the algorithm is done.

\section{Proofs for Structure of Worst-Case Scenarios}
\subsection{Proof of Lemma \ref{lem:out-scenario}}
\label{app:out-scenario}
The lemma states if $s_B^*\in\mathcal{S}$ is transformed such that $w_i(s_B^*) = w_i^-$ if $x_i\notin P_d$, then it remains a \emph{worst-case scenario}. We prove this lemma by first assuming a \emph{worst-case scenario} $s^*\in\mathcal{S}$ and showing a transformation to $s_B^*$ with the required structure.

Consider a \emph{worst-case scenario} $s^*\in\mathcal{S}$ with dominant part $P_d$. Now construct a new scenario $s_B^*$ from $s^*$ by making the following changes: 
\begin{enumerate}
\item $w_i(s_B^*) = w_i(s^*)$ for $x_i\in P_d$, and 
\item $w_i(s_B^*)=w_i^-$ for $x_i\notin P_d$. 
\end{enumerate}

By making this change, the evacuation time for algorithm $A$'s choice of $\{\hat{P},\hat{Y}\}$ does not change since the scenario within the dominant part $P_d$ is unchanged. Thus,
\begin{equation} \label{eq:out-scenario-1}
\Theta^k(P,\{\hat{P},\hat{Y}\},s_B^*) = \Theta^k(P,\{\hat{P},\hat{Y}\},s^*)
\end{equation}

Since by making this change, we are only reducing the number of people in the
buildings, the optimal evacuation time can only decrease. Thus,
\begin{equation}
\label{eq:out-scenario-2} \Theta_{\mathrm{opt}}^k(P,s_B^*)\leq \Theta_{\mathrm{opt}}^k(P,s^*)
\end{equation}

By subtracting equations \ref{eq:out-scenario-1} and \ref{eq:out-scenario-2}, we can see that the regret under scenario
$s_B^*$ cannot be lesser than in $s^*$. This implies that $s_B^*$ is also a \emph{worst-case scenario} with dominant part $P_d\in \hat{P}$ such that $w_i(s_B^*) = w_i^-$ if $x_i\notin P_d$.\qed

\subsection{Proof of Lemma \ref{lem:sub-scenario}}
\label{app:sub-scenario}
Consider any arbitrary \emph{worst-case scenario} $s^*\in\mathcal{S}$ with dominant part $P_d\in \hat{P}$. We will show a construction from $s^*$ to $s_B^*\in\mathcal{S}$ such that the sub-scenario within $P_d$ lies in $\mathcal{S}_L^d\bigcup \mathcal{S}_R^d$. 

The sink chosen by algorithm $A$ in part $P_d$ is $y_d$. Let $x_{l_d}$(resp. $x_{r_d}$) be the leftmost(resp. rightmost) node in $P_d$. Without loss of generality, let us assume that $\Theta_L(P_d,y_d,s^*) > \Theta_R(P_d,y_d,s^*)$, i.e., the left evacuation time dominates in $P_d$ for sink $y_d$ under \emph{worst-case scenario} $s^*$. By the equation defined by Eq. \ref{eq:left-evac},
\begin{equation*} \Theta_{L}(P_d,y_d,s^*)=\max_{l_d\leq i\leq r_d}\left\{\left.
(x-y_d)\tau+\sum_{l_d\leq j\leq i}w_{j}(s^*)\right| y_d > x_i \right\}
\end{equation*}
let $m$ be the index of the node which maximizes the above term.

We will now show the transformation from $s^*$ to $s_B^*$. Consider a node $x_t$. If $x_t\in \{x_{m+1}, ..., x_{r_d}\}$, then by making $w_t(s_B^*) = w_y^-$,
\begin{align*}
\Theta_L(P_d,y_d,s_B^*) & = \Theta_L(P_d,y_d,s^*) && \\
\Theta_{\mathrm{opt}}^k(P,s_B^*) & \leq \Theta_{\mathrm{opt}}^k(P,s^*) &&
\end{align*}
In other words, the evacuation time is unchanged and the optimal $k$-sink evacuation time cannot increase. Thus, the regret in $s_B^*$ cannot be greater than the regret in $s^*$ after this transformation.

If $x_t \in \{x_{l_d}, ... ,x_m\}$, then by making $w_t(s_B^*) = w_t^+$ (See Fig. \ref{fig:left-dominant-scenario}),
\begin{align*}
\Theta_L(P_d,y_d,s_B^*) & = \Theta_L(P_d,y_d,s^*) + \left(w_t^+ - w_t(s^*)\right) && \\
\Theta_{\mathrm{opt}}^k(P,s_B^*) & \leq \Theta_{\mathrm{opt}}^k(P,s^*) + \left(w_t^+ - w_t(s^*)\right) &&
\end{align*}

\begin{figure}[tp]
\centering
\includegraphics[scale=0.4]{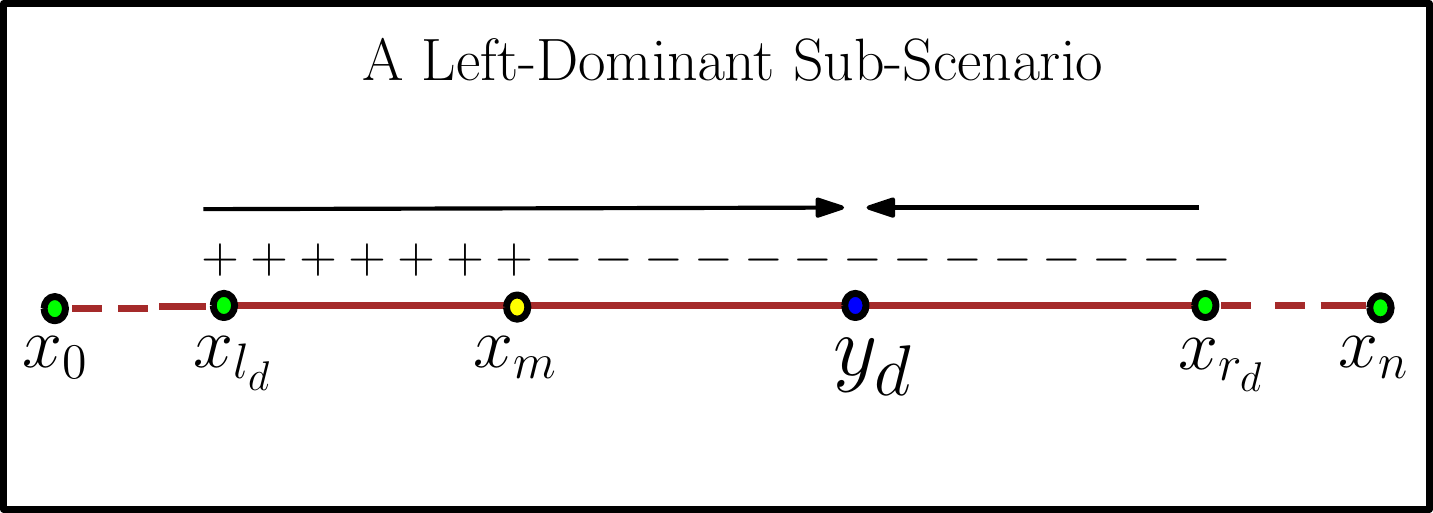}
\caption{\label{fig:left-dominant-scenario} A \emph{left-dominant} \emph{worst-case scenario} in path $P_d$}
\end{figure}

Thus, again by this transformation the regret in $s_B^*$ cannot be greater than the regret in $s^*$. By these two transformations, we have constructed a \emph{left-dominant} sub-scenario within $P_d$. A similar proof exists for constructing a \emph{right-dominant} sub-scenario when the right evacuation time dominates.

Therefore, the sub-scenario in $P_d$ lies in the set $\mathcal{S}_L^d\bigcup \mathcal{S}_R^d$. \qed

\subsection{Proof of Property \ref{prop:n-wcscenarios}}
\label{app:n-wcscenarios}
We are given the $k$-partition $\hat{P}=\{P_1,P_2,...,P_k\}$ by algorithm $A$. If we know the dominant part $P_d\in\hat{P}$, then Thm. \ref{thm:wcscenarios} established the structure of candidate \emph{worst-case scenarios}. 

Let $\ell_i$ be the number of nodes in part $P_i\in\hat{P}$. It is easy to see that $\sum_{1\leq i\leq k}(\ell_i) = n$. If $P_d$ were the dominant part, then there are a possible $O(\ell_d)$ \emph{worst-case scenarios} (from Thm. \ref{thm:wcscenarios}). Thus, the total number of candidate worst-case scenarios (assuming each $P_i$ to be the dominant part) is $O(\sum_{1\leq i\leq k}(\ell_i))$ which is $O(n)$. \qed

\subsection{Proof of Property \ref{prop:n2-wcscenarios}}
\label{app:n2-wcscenarios}
Irrespective of the choice of $\{\hat{P},\hat{Y}\}$ by algorithm $A$, the \emph{worst-case scenario} chosen by adversary $B$, $s_B^*\in\mathcal{S}^*$, is always of the form described in Thm. \ref{thm:wcscenarios}. Looked at from the perspective of path $P$, for some pair of integers $(t_1,t_2)$ such that $(0\leq t_1\leq t_2\leq n+1)$, $s_B^*$ can be broken down as:
\begin{align*}
s_B^* = 
\begin{cases}
w_{i}^{-} & ,0\leq i<t_{1}\\
w_{i}^{+} & ,t_{1}\leq i<t_{2}\\
w_{i}^{-} & ,t_{2}\leq i\leq n
\end{cases}
\end{align*}

There are $O(n^2)$ possible values for $(t_1,t_2)$. Therefore, there are $O(n^2)$ possible \emph{worst-case scenarios}, i.e., $\left| \mathcal{S}^* \right| = O(n^2)$.

\section{Proofs for the Characterization of Minimax-Regret}
\subsection{Proof of Lemma \ref{lem:rij-subpath-inc}}
\label{app:rij-subpath-inc}
Consider the part $P_i$ with left end(resp. right end) as vertex $x_{l_i}$(resp. $x_{r_i}$). Let $P_i^{\prime}$ be the part with one node appended to the right of part $P_i$, its left end is $x_{l_i}$ and its right end is $x_{r_i+1}$. From Eq. \ref{eq:mmr-it-sinks}, the minimax-regret for ADP(Assumed Dominant Part) $P_i$ and ADP $P_i^{\prime}$ can be written as:
\begin{align*}
R_{l_ir_i} & = \min_{l_i\leq t\leq r_i}\left\{R_{l_ir_i}(x_t)\right\} && \\
R_{l_i(r_i+1)} & = \min_{l_i\leq t\leq r_i+1}\left\{R_{l_i(r_i+1)}(x_t)\right\} &&
\end{align*}

We associate $R_{l_ir_i}$ with a sequence of terms $Z = \left(R_{l_ir_i}(x_{l_i}),R_{l_ir_i}(x_{l_i+1}),...,R_{l_ir_i}(x_{r_i})\right)$ where $R_{l_ir_i}$ is the minimum of the terms in the sequence $Z$. Similarly, we associate $R_{l_i(r_i+1)}$ with sequence $Z^{\prime} = \left(R_{l_i(r_i+1)}(x_{l_i}),R_{l_i(r_i+1)}(x_{l_i+1}),...,R_{l_i(r_i+1)}(x_{r_i+1})\right)$. Note that $Z^{\prime}$ has one more term than $Z$, which is $R_{l_i(r_i+1)}(x_{r_i+1})$.

\begin{claim}
For subpaths $P_i$ and $P_i^{\prime}$, $R_{l_ir_i}(x_t)\leq R_{l_i(r_i+1)}(x_t)$ for $(l_i\leq t \leq r_i)$ and $R_{l_ir_i}(x_{r_i})\leq R_{l_i(r_i+1)}(x_{r_i+1})$. In other words, we claim that every element in sequence $Z^{\prime}$ is greater than or equal to some element in $Z$.
\end{claim}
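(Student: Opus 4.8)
The plan is to reduce the claim to a term-by-term monotonicity of the $1$-sink evacuation time under extending a subpath to the right, and then lift this monotonicity to $R_{l_ir_i}(\cdot)$ using the characterization of Eq.~\ref{eq:max-regret1}: for every subpath $[x_l,x_r]$ and every sink $x_t$ with $l\le t\le r$,
\[
R_{lr}(x_t)=\max_{s\in\mathcal{S}^*}\bigl(\Theta^1([x_l,x_r],x_t,s)-\Theta^k_{\mathrm{opt}}(P,s)\bigr),
\]
where $\mathcal{S}^*$ is the fixed, subpath-independent set of $O(n^2)$ candidate \emph{worst-case scenarios} of Property~\ref{prop:n2-wcscenarios}, and where the subtracted term $\Theta^k_{\mathrm{opt}}(P,s)$ is computed on the whole path $P$ and hence depends neither on the subpath nor on the sink.

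\textbf{Step 1 (single-scenario monotonicity).} Fix an arbitrary scenario $s$. First I would establish that, for $l_i\le t\le r_i$,
\[
\Theta^1([x_{l_i},x_{r_i}],x_t,s)\le\Theta^1([x_{l_i},x_{r_i+1}],x_t,s),
\]
and, separately, that
\[
\Theta^1([x_{l_i},x_{r_i}],x_{r_i},s)\le\Theta^1([x_{l_i},x_{r_i+1}],x_{r_i+1},s).
\]
Both follow directly from Eqs.~\ref{eq:left-evac}--\ref{eq:right-evac} by pairing up the terms of the two maxima. For the first inequality the sink does not move, so $\Theta_L$ is literally unchanged (it involves only nodes at or to the left of $x_t$, which are the same in both subpaths), while in $\Theta_R$ every term associated with a node in $(x_t,x_{r_i}]$ merely gains the nonnegative summand $w_{r_i+1}(s)$, and one further nonnegative term, for $x_{r_i+1}$, is added; hence $\Theta_R$, and therefore $\Theta^1=\max(\Theta_L,\Theta_R)$, cannot decrease. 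For the second inequality the sink sits at the right endpoint, so $\Theta^1=\Theta_L$; moving the sink from $x_{r_i}$ to $x_{r_i+1}$ replaces each term $(x_{r_i}-x_j)\tau+\sum_{l_i\le j'\le j}w_{j'}(s)$ by the no-smaller term $(x_{r_i+1}-x_j)\tau+\sum_{l_i\le j'\le j}w_{j'}(s)$, and one extra term, for $x_j=x_{r_i}$, appears.

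\textbf{Step 2 (lifting to $R_{l_ir_i}$).} Fix $t$ with $l_i\le t\le r_i$ and let $s^\ast\in\mathcal{S}^*$ attain the maximum defining $R_{l_ir_i}(x_t)$. Applying Step~1 with this very same $s^\ast$ and subtracting the common constant $\Theta^k_{\mathrm{opt}}(P,s^\ast)$ gives
\[
R_{l_ir_i}(x_t)=\Theta^1([x_{l_i},x_{r_i}],x_t,s^\ast)-\Theta^k_{\mathrm{opt}}(P,s^\ast)\le\Theta^1([x_{l_i},x_{r_i+1}],x_t,s^\ast)-\Theta^k_{\mathrm{opt}}(P,s^\ast)\le R_{l_i(r_i+1)}(x_t),
\]
the last step because $s^\ast\in\mathcal{S}^*$ is one of the competitors in the maximum defining $R_{l_i(r_i+1)}(x_t)$. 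The inequality $R_{l_ir_i}(x_{r_i})\le R_{l_i(r_i+1)}(x_{r_i+1})$ is obtained identically, now taking $s^\ast$ to attain $R_{l_ir_i}(x_{r_i})$ and invoking the second inequality of Step~1. This is exactly the claim, and since every entry of $Z'$ then dominates the corresponding entry of $Z$ (with the extra last entry $R_{l_i(r_i+1)}(x_{r_i+1})$ of $Z'$ dominating the last entry $R_{l_ir_i}(x_{r_i})$ of $Z$), it immediately yields $\min Z'\ge\min Z$, i.e.\ $R_{l_ir_i}\le R_{l_i(r_i+1)}$ (the right-extension half of Lemma~\ref{lem:rij-subpath-inc}; the left-extension half is symmetric).

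I do not expect a genuine obstacle here. The only points needing care are (i) using that $\mathcal{S}^*$ is the same argument set for the short and the long subpath, so the maximizer $s^\ast$ for the shorter one remains a legal competitor for the longer one while $\Theta^k_{\mathrm{opt}}(P,s^\ast)$ is a constant that cancels; and (ii) handling the sink-at-the-right-endpoint comparison on its own, since there the sink also shifts and the first inequality of Step~1 does not apply verbatim.
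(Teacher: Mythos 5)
Your proposal is correct and follows essentially the same route as the paper: take the scenario $s^\ast\in\mathcal{S}^*$ attaining $R_{l_ir_i}(x_t)$ (resp.\ $R_{l_ir_i}(x_{r_i})$), use monotonicity of the $1$-sink evacuation time under extending the subpath (resp.\ the left-evacuation argument when the sink sits at the right endpoint), subtract the common constant $\Theta^k_{\mathrm{opt}}(P,s^\ast)$, and bound by the maximum defining the longer subpath's regret. The only difference is that you spell out the term-by-term monotonicity of Eqs.~\ref{eq:left-evac}--\ref{eq:right-evac} explicitly, which the paper simply invokes.
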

\begin{proof}
Consider some node $x_t (l_i\leq t\leq r_i)$. From Eq. \ref{eq:max-regret1}, we know that
\begin{equation*}
R_{l_ir_i}(x_t) = \max_{s\in\mathcal{S}^*}R_{l_ir_i}(s,x_t)
\end{equation*}
where $R_{l_ir_i}(s,x_t)$ is the regret with ADP $P_i$ for sink $x_t$ under scenario $s$. Let $s^*$ be the \emph{worst-case scenario} which maximizes the term $R_{l_ir_i}(s,x_t)$. Since $P_i$ is a subpath of $P_i^{\prime}$, the evacuation time for sink $x_t$ under scenario $s^*$ in $P_i$ is lesser than or equal to that in $P_i^{\prime}$, 
\begin{align*}
\Theta^1(P_i,x_t,s^*) & \leq \Theta^1(P_i^{\prime},x_t,s^*) && \\
\Theta^1(P_i,x_t,s^*) - \Theta^k_{\mathrm{opt}}(P,s^*) & \leq \Theta^1(P_i^{\prime},x_t,s^*) - \Theta^k_{\mathrm{opt}}(P,s^*) && \\
R_{l_ir_i}(x_t,s^*) & \leq R_{l_i(r_i+1)}(x_t,s^*) &&
\end{align*}
Since, $s^*$ is the maximizing term for $R_{l_ir_i}(x_t)$,
\begin{align}
R_{l_ir_i}(x_t) & \leq R_{l_i(r_i+1)}(x_t,s^*) && \nonumber\\
& \leq \max_{s\in\mathcal{S}^*}R_{l_i(r_i+1)}(x_t,s) \nonumber && \\
& = R_{l_i(r_i+1)}(x_t) && \label{eq:lem-rij-subpath-1}
\end{align}

Thus, for $x_t (l_i\leq t\leq r_i)$, $R_{l_ir_i}(x_t) \leq R_{l_i(r_i+1)}(x_t)$. Consider $s^*$ to be the \emph{worst-case scenario} which maximizes the regret $R_{l_ir_i}(s,x_{r_i})$. Under scenario $s^*$, for sink $x_{r_i}\in P_i$, the evacuation time in $P_i$ is equal to the left evacuation time as there are no people to evacuate to the right. Similarly, for sink $x_{r_i+1}\in P_i^{\prime}$, the evacuation time in $P_i^{\prime}$ equals the left evacuation time. But since $x_{r_i+1}$ lies to the right of $x_{r_i}$, the left evacuation time for $x_{r_i+1}$ will be greater than or equal to the left evacuation time for $x_{r_i}$, i.e.,
\begin{align*}
\Theta^1(P_i,x_{r_i},s^*) & \leq \Theta^1(P_i^{\prime},x_{r_i+1},s^*) && \\
\Theta^1(P_i,x_{r_i},s^*) - \Theta^k_{\mathrm{opt}}(P,s^*) & \leq \Theta^1(P_i^{\prime},x_{r_i+1},s^*) - \Theta^k_{\mathrm{opt}}(P,s^*) && \\
R_{l_ir_i}(x_{r_i},s^*) & \leq R_{l_i(r_i+1)}(x_{r_i+1},s^*) &&
\end{align*}
Since, $s^*$ is the maximizing term for $R_{l_ir_i}(x_{r_i})$,
\begin{align}
R_{l_ir_i}(x_{r_i}) & \leq R_{l_i(r_i+1)}(x_{r_i+1},s^*) && \nonumber \\
& \leq \max_{s\in\mathcal{S}^*}R_{l_i(r_i+1)}(x_{r_i+1},s) \nonumber && \\
& = R_{l_i(r_i+1)}(x_{r_i+1}) && \label{eq:lem-rij-subpath-2}
\end{align}

From Eqs. \ref{eq:lem-rij-subpath-1} and \ref{eq:lem-rij-subpath-2}, we can see that all the terms in sequence $Z^{\prime}$ is greater than or equal to some term in sequence $Z$.\qed
\end{proof}

\begin{corollary}
The minimum of the terms in sequence $Z$ ($=R_{l_ir_i}$) is lesser than or equal to the minimum of the terms in $Z^{\prime}$ ($=R_{l_i(r_i+1)}$) (Follows from the previous claim).
\end{corollary}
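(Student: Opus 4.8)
The plan is to derive the Corollary from the Claim via the elementary fact that if every entry of a list is $\ge$ some entry of another list, then the minimum of the first list is $\ge$ the minimum of the second. Recall that, by Eq.~\ref{eq:mmr-it-sinks} specialized to an ADP, $R_{l_ir_i}=\min Z$ and $R_{l_i(r_i+1)}=\min Z'$, where $Z=(R_{l_ir_i}(x_{l_i}),\dots,R_{l_ir_i}(x_{r_i}))$ and $Z'=(R_{l_i(r_i+1)}(x_{l_i}),\dots,R_{l_i(r_i+1)}(x_{r_i+1}))$.

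First I would fix a minimizing sink for the enlarged subpath: pick $t^\ast$ with $l_i\le t^\ast\le r_i+1$ such that $R_{l_i(r_i+1)}(x_{t^\ast})=\min Z'=R_{l_i(r_i+1)}$. Then I would split into the two cases that match the two inequalities of the Claim. If $t^\ast\le r_i$, the Claim gives $R_{l_ir_i}(x_{t^\ast})\le R_{l_i(r_i+1)}(x_{t^\ast})$, and chaining this with $R_{l_ir_i}=\min Z\le R_{l_ir_i}(x_{t^\ast})$ yields $R_{l_ir_i}\le R_{l_i(r_i+1)}$. If $t^\ast=r_i+1$, the second inequality of the Claim gives $R_{l_ir_i}(x_{r_i})\le R_{l_i(r_i+1)}(x_{r_i+1})$, and $R_{l_ir_i}=\min Z\le R_{l_ir_i}(x_{r_i})$ again closes the chain. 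In both cases $R_{l_ir_i}\le R_{l_i(r_i+1)}$.

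I do not expect any real obstacle here: all of the substance is already contained in the Claim, whose content is the subpath-monotonicity of the single-sink evacuation time $\Theta^1$ together with the observation that the subtracted optimal term $\Theta^k_{\mathrm{opt}}(P,s^\ast)$ is unchanged on the two sides, since the ambient path $P$ and the scenario $s^\ast$ are the same when we only enlarge an ADP. The one addition I would make is a remark that the mirror-image argument — prepending a node on the left, so that each entry of the left-extended sequence dominates an entry of $Z$ — gives the companion bound $R_{l_ir_i}\le R_{(l_i-1)r_i}$, which together with the above completes the proof of Lemma~\ref{lem:rij-subpath-inc}.
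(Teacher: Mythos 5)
Your proposal is correct and matches the paper's intent: the corollary is stated there as an immediate consequence of the claim (every entry of $Z'$ dominates some entry of $Z$, hence $\min Z \leq \min Z'$), and your case split on the minimizer $x_{t^\ast}$ of $Z'$ is simply that implication written out explicitly. Your closing remark about the symmetric left-extension bound $R_{l_ir_i}\leq R_{(l_i-1)r_i}$ likewise coincides with how the paper finishes the proof of Lemma~\ref{lem:rij-subpath-inc}.
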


Thus, $R_{l_ir_i}\leq R_{l_i(r_i+1)}$ and by a symmetrical argument, $R_{l_ir_i}\leq R_{(l_i-1)r_i}$.\qed

\subsection{Proof of Lemma \ref{lem:rijs-subpath-inc}}
\label{app:rijs-subpath-inc}
Let $\mathcal{R}=\max_{1\leq i\leq q}\left\{R_{l_ir_i}\right\}$ and $\mathcal{R}^{\prime}=\max_{1\leq i\leq q}\left\{R_{l_i^{\prime} r_i^{\prime}} \right\}$. We have to prove that $\mathcal{R}\leq \mathcal{R}^{\prime}$.

Assume the contrary, i.e., $\mathcal{R} > \mathcal{R}^{\prime}$, i.e.,
\begin{equation*}
\max_{1\leq i\leq q}\left\{R_{l_ir_i}\right\} > \max_{1\leq i\leq q}\left\{R_{l_i^{\prime} r_i^{\prime}} \right\}
\end{equation*}

Now, let us consider the $q$-partition $\hat{P}_q=\{P_1,P_2,...,P_q\}$. Now, consider the $q$-partition of $\hat{P}_q^{\prime}$ but only upto $x_{r_q}$, i.e., restricting the $q^{th}$ part to end at $x_{r_q}$. Let us call this $\hat{P}_q^{\prime\prime}$. In the $q$-partition $\hat{P}_q^{\prime\prime}$, the $q^{th}$ part is smaller than the $q^{th}$ part in $\hat{P}_q^{\prime}$ (by 1 node). Therefore, by Lemma \ref{lem:rij-subpath-inc}, the minimax-regret for the $q^{th}$ part being the ADP in $\hat{P}_q^{\prime\prime}$ is lesser than or equal to the minimax-regret for the $q^{th}$ part as ADP in $\hat{P}_q^{\prime}$. All the other parts in $\hat{P}_q^{\prime\prime}$ have the same ADP minimax-regret as the parts in $\hat{P}_q^{\prime}$. Therefore, if $\mathcal{R}^{\prime\prime}$ denotes the minimax-regret for the $q$-partition $\hat{P}_q^{\prime\prime}$, then $\mathcal{R}^{\prime\prime}\leq \mathcal{R}^{\prime}$. This implies that $\mathcal{R}^{\prime\prime} < \mathcal{R}$, which contradicts our assumption that $\mathcal{R}$ was the minimax-regret considering the first $q$ parts as ADPs with right end of the $q^{th}$ part fixed at $x_{r_q}$. 

Therefore, $\mathcal{R}\leq \mathcal{R}^{\prime}$.

\section{Proofs for Minimax-Regret Binary-Search Based Algorithm}

\subsection{Calculation of $R_{l_ir_i}$ for all $P_i$ in $O(kn^2\log n)$ time}
\label{app:rliri-procedure}
From Property \ref{prop:n-wcscenarios}, we know that there are $O(n)$ candidate \emph{worst-case scenarios} given the $k$-partition $\hat{P}$. Let $\mathcal{S}_B^*$ be the set of these $O(n)$ scenarios. In order to find the minimax-regret given the $k$-partition $\hat{P}$, we can use the following simple procedure:
\begin{enumerate}
\item For every scenario $s_B^*\in \mathcal{S}_B^*$ ($O(n)$ scenarios):
  \begin{itemize}
    \item Find the optimal $k$-sink evacuation solution for $s_B^*$. ($O(n.k.\log n)$ time) and store in $\Theta_{\mathrm{opt}}^k(P,s_B^*)$.
  \end{itemize}
\item For every part $P_i (0\leq i \leq k)$:
  \begin{enumerate}
    \item For every candidate scenario $s_B^*\in\mathcal{S}_B^*$ ($O(n)$ scenarios by Property \ref{prop:n-wcscenarios}):
      \begin{itemize}
        \item Find $R_{l_ir_i}(s_B^*,x_j) \forall x_j\in P_i$. (in $O(n)$ time, see Appendix \ref{app:rliri-algo})
      \end{itemize}
  \end{enumerate}
\item For every possible sink $x_j (0\leq j\leq n)$ ($O(n)$ sinks):
  \begin{itemize}
    \item Find $R_{l_ir_i}(x_j)$ by finding $\max_{s_B^*\in\mathcal{S}_B^*} R_{l_ir_i}(s_B^*,x_j)$. ($O(n)$ time).
  \end{itemize}
\item For every part $P_i (0\leq i \leq k)$:
  \begin{itemize}
    \item Find $R_{l_ir_i}$ by finding $\min_{l_i \leq j\leq r_i}R_{l_ir_i}(x_j)$. ($O(n)$ time for all parts combined)
  \end{itemize}
\item The minimax-regret is the maximum of all $R_{l_ir_i}$ values. ($O(n)$ time)
\end{enumerate}

As we can see, this simple procedure to find the minimax-regret $R_{l_ir_i}$ for all $P_i$ given the $k$-partition $\hat{P}$ takes $O(kn^2\log n)$ time.

\subsection{Proof of Property \ref{prop:unimodality-partitions}}
\label{app:unimodality-partitions}
If the parts $\{P_{i+1},P_{i+2},...,P_k\}$ are fixed, then consider the left end of the $i^{th}$ part to be at $x_{l_i}$. If the minimax-regret for the entire path is $\mathcal{R}$, we can write it as:
\begin{align}
\mathcal{R} & = \max_{1\leq j\leq k}\left\{R_{l_jr_j}\right\} && \nonumber \\
& = \max\left(\max_{1\leq j\leq i-1}R_{l_jr_j},\max_{i\leq j\leq k}R_{l_jr_j}\right) \label{eq:unimodal-r} &&
\end{align}

Now if we move the left end of the $i^{th}$ part to $x_{l_i+1}$, then by Lemma \ref{lem:rij-subpath-inc}, we know that $R_{(l_i+1)r_i}\leq R_{l_ir_i}$. The values of $R_{l_jr_j}$ for $j=i+1,...,k$ do not change as those parts do not change. This means the term $\max_{i\leq j\leq k}R_{l_jr_j}$ in Eq. \ref{eq:unimodal-r} cannot increase. Also, by Lemma \ref{lem:rijs-subpath-inc}, we know that the minimax-regret considering only the first $i-1$ parts as ADPs cannot decrease. This means that in Eq. \ref{eq:unimodal-r}, the term $\max_{1\leq j\leq i-1}R_{l_jr_j}$ cannot decrease. Therefore, $\mathcal{R}$ is unimodal as a function of the left end of the $i^{th}$ part if the parts $\{P_{i+1},...,P_k\}$ are fixed.\qed 

\subsection{Algorithm to find $R_{l_ir_i}(s_B^*,x_j) \forall x_j\in P_i$ in $O(n)$ time}
\label{app:rliri-algo}
Given a \emph{worst-case scenario} $s_B^*\in\mathcal{S}_B^*$ (or any arbitrary scenario in fact) and an Assumed Dominant Part(ADP) $P_i\in\hat{P}$ with left end(resp. right end) as $x_{l_i}$(resp. $x_{r_i}$), we first give a simple procedure to calculate $\Theta^1(P_i,x_j,s_B^*)\forall j (l_i\leq j\leq r_i)$, the $1$-sink evacuation time in $P_i$ for all possible sinks $x_j\in P_i$ under scenario $s_B^*$.

We first prove the following lemma which will help us in proposing an efficient procedure:
\begin{lemma} \label{lem:rliri-2choices}
Consider a subpath(part) $P_i\in\hat{P}$ and a sink $x_j\in P_i$. Let us consider the left evacuation time which is the maximum of the evacuation functions of the vertices to the left of $x_j$ (from Eq. \ref{eq:left-evac}). Let $m$(resp. $x_m$) be the index(resp. vertex) which maximizes this function. If we move the sink from $x_j$ to $x_{j+1}$, then the maximizing index(resp. vertex) for the left evacuation to $x_{j+1}$ would be either $m$(resp. $x_m$) or $j$(resp. $x_j$). The same statement holds symmetrically for the right evacuation time.
\end{lemma}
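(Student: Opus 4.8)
The plan is to track how the left-evacuation function of each vertex to the left of the sink changes when the sink slides one step to the right, from $x_j$ to $x_{j+1}$. For a sink placed at a vertex $x$, write $g_x(x_t) = (x - x_t)\tau + \sum_{l_i \le p \le t} w_p(s_B^*)$ for the evacuation function of a vertex $x_t$ with $x_t < x$; by Eq.~\ref{eq:left-evac} the left evacuation time of sink $x$ in $P_i$ is $\max_{x_t < x} g_x(x_t)$, and $m$ is the index attaining this maximum when $x = x_j$ (if $j = l_i$ there is nothing to the left of $x_j$, and the claim for $x_{j+1}$ is immediate, so assume $m$ is well defined).

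First I would note that the vertices contributing to the maximum for sink $x_{j+1}$ are exactly those contributing for sink $x_j$, namely the vertices $x_t$ with $l_i \le t < j$, together with the single additional vertex $x_j$ (which lies strictly to the left of $x_{j+1}$ but not of $x_j$). Hence the maximizing vertex for $x_{j+1}$ is determined by whichever of $\max_{x_t < x_j} g_{x_{j+1}}(x_t)$ and $g_{x_{j+1}}(x_j)$ is larger.

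The crux is then the observation that for every vertex $x_t$ with $x_t < x_j$ one has $g_{x_{j+1}}(x_t) = g_{x_j}(x_t) + (x_{j+1} - x_j)\tau$: the prefix sum $\sum_{l_i \le p \le t} w_p(s_B^*)$ is anchored at the fixed left endpoint $x_{l_i}$ of the part and so does not depend on the sink, while the distance term grows by exactly $(x_{j+1} - x_j)\tau$, the same additive constant for all such $x_t$. Adding a common constant to every value does not change which vertex attains the maximum, so the maximizer of $g_{x_{j+1}}$ over $\{x_t : x_t < x_j\}$ is still $x_m$. Combined with the previous step, the maximizing vertex for the left evacuation to $x_{j+1}$ is either $x_m$ or $x_j$, which is the claim. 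The right-evacuation statement follows by the mirror-image argument (sink moving from $x_j$ to $x_{j-1}$, with suffix sums anchored at the fixed right endpoint $x_{r_i}$), changing only notation.

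I do not anticipate a real obstacle: the statement is essentially the remark that sliding the sink rightward translates every left-side evacuation function by one common additive constant and introduces exactly one new candidate term. The only point that needs a moment's care is verifying that the weight sum in $g_x(x_t)$ is genuinely independent of the sink position — which it is, precisely because within the part it runs from the fixed left end $x_{l_i}$ up to $t$ — since this sink-independence is exactly what makes the shift uniform across all the relevant vertices.
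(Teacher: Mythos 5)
Your proposal is correct and follows essentially the same argument as the paper's own proof: moving the sink from $x_j$ to $x_{j+1}$ adds the common constant $(x_{j+1}-x_j)\tau$ to every existing left-evacuation term (the weight prefix sums being anchored at $x_{l_i}$ and hence sink-independent), so the old maximizer $x_m$ survives, and the only new candidate is the former sink $x_j$. The explicit handling of the degenerate case $j=l_i$ is a small addition beyond the paper's write-up but changes nothing substantive.
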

\begin{proof}
From Eq. \ref{eq:left-evac}, the left evacuation time for sink $x_j\in P_i$ is:
\begin{equation*}
\Theta_{L}(P_i,x_j,s_B^*)=\max_{l_i\leq t\leq r_i}\left\{\left. (x_j-x_t)\tau+\sum_{l_i\leq z\leq t}w_{z}(s_B^*)\right| x_j > x_t \right\}
\end{equation*}

and the maximizing term is $t=m$. Now, when the sink is moved to $x_{j+1}$, notice that the only change in the evacuation function of a node is the addition of the length $\left|x_{j+1}-x_j\right|$. Therefore, the evacuation fuction among the nodes $\{x_{l_i},...,x_{j-1}\}$ is still maximum for $x_m$. Also, the evacuation function of one new node $x_j$ (the previous sink) is added as a candidate for the maximum term. Therefore, the new maximum can be calculate by one comparison of the evacuation function between the old maximum (for $x_m$) and newly added term (for $x_j$). \qed
\end{proof}

The following procedure will find $\Theta^1(P_i,x_j,s_B^*)\forall j (l_i\leq j\leq r_i)$ given scenario $s_B^*$:
\begin{enumerate}
  \item For $x_j\in P_i$ (Iterating from $j=l_i$ to $r_i$):
    \begin{itemize}
      \item Find left evacuation time for $x_j$ in $O(1)$ time (Lemma \ref{lem:rliri-2choices})
    \end{itemize}
  \item For $x_j\in P_i$ (Iterating back from $j=r_i$ to $l_i$):
    \begin{itemize}
      \item Find right evacuation time for $x_j$ in $O(1)$ time (Lemma \ref{lem:rliri-2choices})
    \end{itemize}
  \item For $x_j\in P_i$:
    \begin{itemize}
      \item Use the max of left and right evacuation times to find the evacuation time for $x_j$ under scenario $s_B^*$.
    \end{itemize}
\end{enumerate}
From this procedure, we can find out $\Theta^1(P_i,x_j,s_B^*)$, the evacuation time in $P_i$ under sink $x_j\forall j (l_i\leq j\leq r_i)$ in $O(n)$ time. Since we already know $\Theta^k_{\mathrm{opt}}(P,s_B^*)$ (from the previous step of the procedure in Sec. \ref{sec:algo-bs}), the optimal $k$-sink evacuation time for $s_B^*$, $R_{l_ir_i}(s_B^*,x_j)$ ($=\Theta^1(P_i,x_j,s_B^*)-\Theta^k_{\mathrm{opt}}(P,s_B^*)$) can be determined in $O(n)$ time.

\section{An $O(kn)$ DP Implementation for the Minimax-Regret $k$-Sink Location Problem}
\label{app:mmr-k-sink-dp}

\subsection{DP Algorithm for Minimax-Regret $k$-Sink Location}
A naive DP table filling procedure of the recurrence in Eq. \ref{eq:rec} will take $O(kn^2)$ time. Also, the minimax-regret $k$-partition and sinks can be reconstructed by storing the optimizing values during the DP without adding any extra time complexity. We observe two crucial properties (very similar to the Properties \ref{prop:dp-1} and \ref{prop:dp-2}) which helps bring down the running time.

\begin{property} \label{prop:mmr-dp-1}
In the recurrence given by Eq. \ref{eq:rec}, keeping $q$(number of parts) fixed, if we increment $i$, then the minimizing $j$ value cannot decrease.
\end{property}
\begin{proof}
See Appendix \ref{app:mmr-dp-1}
\end{proof}

\begin{property} \label{prop:mmr-dp-2}
In the recurrence given by Eq. \ref{eq:rec}, keeping $i$ and $q$ fixed, $\max\left(M(q-1,j-1),R_{ji}\right)$ is unimodal with a unique minimumm value as a function of $j$.
\end{property}
\begin{proof}
See Appendix \ref{app:mmr-dp-2}
\end{proof}

Therefore, because of the two properties stated above, we observe that Remark \ref{rem:faster-dp} also holds true for this recurrence thereby allowing us to update $j$ in an amortized $O(1)$ time. The new recurrence is:
\begin{equation}
M(q,i) = \min_{j^{\prime}\leq j\leq i}\max\left(M(q-1,j-1),R_{ji}\right)
\end{equation}
where $j^{\prime}$ is the optimum value of $j$ for $M(q,i-1)$.

The DP procedure is illustrated below:
\begin{enumerate}
  \item Fill in $M(1,i) = R_{0i} (0\leq i\leq n)$. ($O(n)$ time)
  \item For $q\leftarrow 2$ to $k$:
    \begin{itemize}
      \item $j\leftarrow 0$
    \end{itemize}
    \begin{enumerate}
      \item For $i\leftarrow 0$ to $n$:
        \begin{itemize}
          \item Do $j\leftarrow j+1$ till minimum of $\max(M(q-1,j-1),R_{ji})$ is found which is equal to $M(q,i)$.
          \item Store $M(q,i)$ in the DP table.
        \end{itemize}
    \end{enumerate}
\end{enumerate}

For a given value of $q$(number of parts), $i$ and $j$ are incremented atmost $n$ times each and each increment is handled in $O(1)$ time. Therefore, the running time is $O(kn)$.

\subsection{Proof of Property \ref{prop:mmr-dp-1}}
\label{app:mmr-dp-1}
The proof for this property is very similar to the proof for Property \ref{prop:dp-1} because of the similar nature the optimal evacuation time recurrence and the minimax-regret recurrence.

Equation \ref{eq:rec} is given by:
\begin{equation}
M(q,i) = \min_{0\leq j\leq i}\max\left(M(q-1,j-1),R_{ji}\right)
\end{equation}
where $R_{ji}$ is the minimax-regret with ADP(Assumed Dominant Part) as the part with left end $x_j$ and right end $x_i$.

Assume $j^{\prime}$ is the  minimizing $j$ value for $M(q,i-1)$ and $j^{\prime\prime}$ is the minimizing $j$ value for $M(q,i)$. We need to prove that $j^{\prime\prime}\geq j^{\prime}$.

Assume the contrary, i.e., $j^{\prime\prime}<j^{\prime}$. We now have two cases to deal with:\\\\
\textbf{Case 1}: $R_{j^{\prime}(i-1)} \geq M(q-1,j^{\prime}-1)$

In this case, the minimax-regret calculated in the subpath from $x_0$ to $x_{i-1}$ for $q$ parts will be $M(q,i-1) = R_{j^{\prime}(i-1)}$. Now,
\begin{equation*}
M(q,i) = \max\left(M(q-1,j^{\prime\prime}),R_{j^{\prime\prime}i}\right)
\end{equation*}

\begin{remark} \label{rem:path-subpath-mmr}
Consider the minimax-regret considering only the first $q$ parts as ADPs. This minimax-regret calculated when these first $q$ parts are restricted to a path is greater than or equal to the minimax-regret when the first $q$ parts are restricted to any of its subpaths.
\end{remark}
\begin{proof}
This follows from Lemma \ref{lem:rijs-subpath-inc}, where it is proved when a path is extended to the right by a node. By symmetry, it is also true when the path is extended to the left.\qed
\end{proof}

Since $j^{\prime\prime} < j^{\prime}$, 
\begin{align}
M(q-1,j^{\prime\prime}-1) & \leq M(q-1,j^{\prime}-1) \nonumber && \text{(from Remark \ref{rem:path-subpath-mmr})}\\
 & \leq R_{j^{\prime}(i-1)} \nonumber && \text{(from case assumption)} \\
 & \leq R_{j^{\prime}i} \nonumber && \text{(from Remark \ref{rem:path-subpath-mmr})}\\
& \leq R_{j^{\prime\prime}i} && \text{(from Remark \ref{rem:path-subpath-mmr})} \label{eq:prop-mmr-1}
\end{align}

Therefore from Eq. \ref{eq:prop-mmr-1}, $M(q,i)=R_{j^{\prime\prime}i}$. Since $j^{\prime\prime}$ is the largest minimizing $j$ value for $M(q,i)$ and $j^{\prime} > j^{\prime\prime}$,
\begin{align}
R_{j^{\prime\prime}i} & < \max\left(M(q-1,j^{\prime}-1),R_{j^{\prime}i}\right) && \label{eq:prop-mmr-2} \\
& = R_{j^{\prime}i}
\end{align}
But from Lemma \ref{lem:rij-subpath-inc}, we know that $R_{j^{\prime\prime}i}\geq R_{j^{\prime}i}$. Eq. \ref{eq:prop-mmr-2} cannot be true. We arrive at a contradiction. Therefore, our assumption that $j^{\prime\prime} < j^{\prime}$ is false for this case. So,  $j^{\prime\prime}\geq j^{\prime}$.\\\\
\textbf{Case 2}: $R_{j^{\prime}(i-1)} < M(q-1,j^{\prime}-1)$

In this case, $M(q,i-1) = M(q-1,j^{\prime}-1)$. The minimizing $j$ value for $M(q,i)$ is $j^{\prime\prime}$,
\begin{equation*}
M(q,i) = \max\left(M(q-1,j^{\prime\prime}-1),R_{j^{\prime\prime}i}\right)
\end{equation*}

Since $j^{\prime\prime} < j^{\prime}$,
\begin{equation} \label{eq:prop-mmr-3}
\max\left(M(q-1,j^{\prime\prime}-1),R_{j^{\prime\prime}i}\right) < \max\left(M(q-1,j^{\prime}-1),R_{j^{\prime}i}\right)
\end{equation}

By the case assumption and the fact that $j^{\prime}$ is the minimizing $j$ value for $M(q,i-1)$,
\begin{align}
R_{j^{\prime}(i-1)} & < M(q-1,j^{\prime}-1) \nonumber && \text{(from case assumption)}\\
& \leq \max(M(q-1,j^{\prime\prime}-1),R_{j^{\prime\prime}(i-1)}) \nonumber && \\
& \leq \max(M(q-1,j^{\prime\prime}-1),R_{j^{\prime\prime}i}) \label{eq:prop-mmr-4} && \text{(from Remark \ref{rem:path-subpath-mmr})}
\end{align}
Therefore,
\begin{align}
M(q-1,j^{\prime}-1) & \leq \max(M(q-1,j^{\prime\prime}-1),R_{j^{\prime\prime}i}) \label{eq:prop-mmr-5} && \text{(from Eq. \ref{eq:prop-mmr-4})}\\
R_{j^{\prime}i} & \leq R_{j^{\prime\prime}i} && \label{eq:prop-mmr-6} \text{(from Remark \ref{rem:path-subpath-mmr})}
\end{align}
It is easy to see that Eqs. \ref{eq:prop-mmr-3}, \ref{eq:prop-mmr-5} and \ref{eq:prop-mmr-6} contradict each other. Therefore, our assumption that $j^{\prime\prime} < j^{\prime}$ is false for this case. So, $j^{\prime\prime} \geq j^{\prime}$. \qed

\subsection{Proof of Property \ref{prop:mmr-dp-2}}
\label{app:mmr-dp-2}
The minimax-regret is given by the recurrence in Eq. \ref{eq:rec}:
\begin{equation*}
M(q,i) = \min_{0\leq j\leq i}\left\{\max\left(M(q-1,j-1),R_{ji}\right)\right\}
\end{equation*}

Keeping $q$ and $i$ fixed, if we increment $j$, then by Lemma \ref{lem:rij-subpath-inc}, $R_{ji}$ cannot increase and by Lemma \ref{lem:rijs-subpath-inc}, $M(q-1,j-1)$ cannot decrease. This implies that $\max\left(M(q-1,j-1),R_{ji}\right)$ is unimodal as a function of $j$.\qed

\section{Reduction of $R_{ji}$ Precomputation Time from $O(n^5)$ to $O(n^3)$}
\label{app:rji-precomputation}
In Sect. \ref{sec:rji-naive}, we had given an $O(n^5)$ procedure for the precomputation of $R_{ji}$'s. Here, we will reduce the precomputation time to $O(n^3)$.

\subsection{A Lookup Table for Step 4 - $O(n^4)$ \label{sec:precomp-amort-step2}}
In Step 4 of the Naive approach (in Sect. \ref{sec:rji-naive}), we take $O(n)$ time to compute the regret because of the calculation of the evacuation time on a subpath given a sink and a scenario. Instead, we can construct an $O(1)$ time lookup data structure in $O(n^3)$ time. A (subpath,sink,scenario) query to the structure will yield the $1$-sink evacuation time in the subpath for the sink under the scenario in $O(1)$ time (For details see Appendix \ref{app:1-sink-subpath-precomp}.) Now, finding the regret for a (subpath,sink,scenario) triplet can be done in $O(1)$ time. Thus, the time for precomputation of $R_{ji}$ comes down to $O(n^4)$.

\subsection{Amortizaton on Step 2 - $O(n^3)$}
In Step 2, for each subpath, we are checking $O(n)$ possible minimax-regret sink locations. Instead, if the following two lemmas were true, we would only need to check an amortized $O(1)$ sink locations.

\begin{lemma}
\label{lem:rji-right}
Consider an ADP(Assumed Dominant Part) with leftmost(resp. rightmost) node as $x_l$(resp. $x_r$). Let $x_t(l\leq t\leq r)$ be the minimax-regret sink which minimizes the \emph{max-regret}. In the subpath from $x_l$ to $x_{r+1}$ (extending the right end of the part), there exists a minimax-regret sink at some $x_i$, where $i\geq t$.
\end{lemma}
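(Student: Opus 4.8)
The plan is to mimic the case analysis already used in the proof of Property \ref{prop:dp-1} (Appendix \ref{app:prop-dp-1}), exploiting the fact that $R_{l_ir_i}$ behaves monotonically under subpath extension (Lemma \ref{lem:rij-subpath-inc}) and that, for a fixed subpath, the relevant quantities are unimodal in the sink position. First I would fix notation: let $Z(t) = R_{lr}(x_t)$ be the max-regret of sink $x_t$ in the ADP $[x_l,x_r]$, and $Z'(t) = R_{l(r+1)}(x_t)$ the max-regret of sink $x_t$ in the extended ADP $[x_l,x_{r+1}]$; here $t$ ranges over $l\le t\le r$ for $Z$ and $l\le t\le r+1$ for $Z'$. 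We want: if $t^*=\arg\min_t Z(t)$ (taking, say, the largest such minimizer), then some minimizer of $Z'$ occurs at an index $\ge t^*$.

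The key structural input is that $Z'(t)$, viewed as a function of $t$ on $[x_l,x_{r+1}]$, is unimodal with a unique minimum value. This should follow from the same argument used for Property \ref{prop:unimodality-partitions} / \ref{prop:dp-2}: $R_{l(r+1)}(x_t)$ is the max over worst-case scenarios of $\Theta^1([x_l,x_{r+1}],x_t,s)-\Theta^k_{\mathrm{opt}}(P,s)$, and for each fixed scenario $\Theta^1$ is the max of a left-evacuation term (nondecreasing in $t$) and a right-evacuation term (nonincreasing in $t$), so it is unimodal in $t$; the pointwise max over scenarios of unimodal functions sharing the same ``valley orientation'' is again unimodal. (If the paper has already isolated this unimodality elsewhere I would simply cite it; otherwise I would state it as a small claim here.) Given unimodality of $Z'$, it suffices to show that $Z'$ is nonincreasing on the range $[x_l,x_{t^*}]$, equivalently that $Z'(t^*-1)\ge Z'(t^*)$ is not forced to fail — more precisely, that the minimizer of $Z'$ cannot lie strictly to the left of $t^*$.

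Suppose for contradiction that every minimizer of $Z'$ lies at some index $<t^*$; pick such a minimizer $s^\dagger<t^*$. Now I would compare $Z(\cdot)$ and $Z'(\cdot)$ at the two indices $s^\dagger$ and $t^*$. Two inequalities are available: (i) monotonicity under extension at a \emph{fixed} sink, $Z(t)\le Z'(t)$ for all $l\le t\le r$ (this is exactly the Claim inside the proof of Lemma \ref{lem:rij-subpath-inc}, since $[x_l,x_r]$ is a subpath of $[x_l,x_{r+1}]$ and the optimal term $\Theta^k_{\mathrm{opt}}(P,s)$ is unchanged); and (ii) by unimodality of $Z$ on $[x_l,x_r]$ together with $t^*$ being its (largest) minimizer, $Z$ is nonincreasing up to $t^*$, so $Z(s^\dagger)\ge Z(t^*)$. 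Chaining these, $Z'(s^\dagger)\ge Z(s^\dagger)\ge Z(t^*)$. To close the contradiction I need $Z(t^*)\ge Z'(t^*)$ fails in general — so this naive chain is not quite enough, which is exactly where the real work lies.

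The main obstacle, then, is the last step: I cannot simply use $Z(t^*)\ge Z'(t^*)$, because extension can only increase the max-regret. The fix is to argue at the level of the unimodal function $Z'$ itself rather than by comparison with $Z$. Concretely: unimodality of $Z'$ means that if $s^\dagger$ is a minimizer then $Z'$ is nonincreasing on $[x_l,x_{s^\dagger}]$ and nondecreasing on $[x_{s^\dagger},x_{r+1}]$. So it is enough to rule out $Z'(t)<Z'(t^*)$ for every $t\le t^*-1$ that is ``on the decreasing side'', and for that I would invert the role of the parts: apply Lemma \ref{lem:rijs-subpath-inc}-style reasoning to the complementary subpath, or argue that moving the sink from $x_{s^\dagger}$ rightward toward $x_{t^*}$ can only decrease the right-evacuation contribution while the left-evacuation contribution, evaluated over worst-case scenarios of the \emph{extended} part, is still dominated at $x_{t^*}$ by its value in the non-extended part plus a controlled slack. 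I expect the cleanest route is: show directly that the value $Z'$ restricted to sinks in $[x_{t^*}, x_{r+1}]$ already achieves the global minimum of $Z'$, by exhibiting, for any $s^\dagger<t^*$, that the worst-case scenario witnessing $Z'(s^\dagger)$ gives regret at least as large for sink $x_{t^*}$ in $[x_l,x_{r+1}]$ — using that the \emph{added} node $x_{r+1}$ only ever contributes to the left-evacuation term, which grows monotonically as the sink moves right, so shifting the sink from $x_{s^\dagger}$ to $x_{t^*}$ weakly helps the worst-case balance. Filling that comparison in carefully (checking that the left/right dominant structure of worst-case scenarios from Theorem \ref{thm:wcscenarios} is preserved under the shift) is the crux; everything else is the routine unimodality bookkeeping above.
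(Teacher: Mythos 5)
Your setup matches the paper's proof up to a point: you assume every minimizer of the extended-path function $Z'$ lies strictly left of $t^*$, and you derive the chain $Z'(s^\dagger)\ge Z(s^\dagger)\ge Z(t^*)$ (the paper derives the same inequalities, and combines them with the contradiction hypothesis $Z'(s^\dagger)<Z'(t^*)$ to get the strict inequality $Z(t^*)<Z'(t^*)$, i.e.\ $R^O_t<R^N_t$). You also correctly recognize that the naive chain does not close the argument. But the fix you sketch is where the gap is, and it is a real one. The paper's key step is a claim about the worst-case scenario $s^N_t$ witnessing $Z'(t^*)$ (not the one witnessing $Z'(s^\dagger)$, which is the scenario you propose to use): under $s^N_t$, the \emph{right} evacuation time at $x_{t^*}$ in the extended path must dominate the left one, because if the left term dominated then $Z'(t^*)$ would equal a left-evacuation regret that is unchanged by the extension and hence bounded by $Z(t^*)$, contradicting $Z(t^*)<Z'(t^*)$. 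Once right-dominance at $x_{t^*}$ is established, moving the sink left to any $s^\dagger<t^*$ can only increase the evacuation time under the \emph{same} scenario $s^N_t$, so $Z'(s^\dagger)\ge Z'(t^*)$, contradicting the assumption. Your proposal never identifies this right-dominance claim, which is the crux.

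Two further problems with your sketch as written. First, the direction is off: exhibiting that the scenario witnessing $Z'(s^\dagger)$ ``gives regret at least as large for sink $x_{t^*}$'' would yield $Z'(t^*)\ge Z'(s^\dagger)$, which is consistent with (indeed implied by) the contradiction hypothesis and rules nothing out; you need an inequality of the form $Z'(s^\dagger)\ge Z'(t^\#)$ for some $t^\#\ge t^*$, and that is exactly what the right-dominance argument with the scenario for $x_{t^*}$ delivers. Second, your supporting intuition is backwards: the appended node $x_{r+1}$ lies to the \emph{right} of every candidate sink $x_t$ with $t\le r$, so it enters the right-evacuation term, not the left one; it is precisely this that forces the worst case at $x_{t^*}$ to become right-dominant and makes leftward sink moves harmful. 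Finally, note that the paper's proof does not need unimodality of $Z'$ at all (that is the separate Lemma \ref{lem:rji-unimodal}, used jointly with this lemma only in the algorithm), so the unimodality bookkeeping you lean on is neither sufficient nor necessary here.
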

\begin{proof} See Appendix \ref{app:rji-right}.
\end{proof}

\begin{lemma}
\label{lem:rji-unimodal}
Consider an ADP with leftmost(resp. rightmost) node as $x_l$(resp. $x_r$). The \emph{max-regret} for a sink $x_i (l\leq i\leq r)$ is $R_{lr}(x_i)$. $R_{lr}(x_i)$ is unimodal with a unique minimum value as a function of the sink $x_i (l\leq i\leq r)$.
\end{lemma}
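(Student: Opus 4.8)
The plan is to separate the $1$-sink evacuation time into its left and right halves \emph{before} taking the maximum over scenarios, so that the max-regret becomes the pointwise maximum of a monotone non-decreasing function of the sink position and a monotone non-increasing one; the maximum of such a pair is automatically valley-shaped.

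First I would rewrite, using Eq.~\ref{eq:max-regret1} and $\Theta^1=\max(\Theta_L,\Theta_R)$ together with the fact that $\max$ distributes over $\max$,
\[
R_{lr}(x_i)\;=\;\max_{s\in\mathcal{S}^*}\Bigl[\Theta^1([x_l,x_r],x_i,s)-\Theta^k_{\mathrm{opt}}(P,s)\Bigr]\;=\;\max\bigl(A(x_i),\,B(x_i)\bigr),
\]
where $A(x_i)=\max_{s\in\mathcal{S}^*}\bigl[\Theta_L([x_l,x_r],x_i,s)-\Theta^k_{\mathrm{opt}}(P,s)\bigr]$ and $B(x_i)=\max_{s\in\mathcal{S}^*}\bigl[\Theta_R([x_l,x_r],x_i,s)-\Theta^k_{\mathrm{opt}}(P,s)\bigr]$. (The same interchange works verbatim if the maximum is instead taken over all of $\mathcal{S}$.)

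Next I would establish the two monotonicities. Fix a scenario $s$. Moving the sink one vertex to the right inside $[x_l,x_r]$, from $x_i$ to $x_{i+1}$, increases the term $(x-x_t)\tau+\sum_{l\le z\le t}w_z(s)$ of Eq.~\ref{eq:left-evac} by exactly $(x_{i+1}-x_i)\tau$ for every vertex $x_t$ already to the left of the sink, and additionally makes $x_i$ a new candidate in the maximum; hence $\Theta_L([x_l,x_r],x_i,s)$ is non-decreasing in $x_i$. Since $\Theta^k_{\mathrm{opt}}(P,s)$ does not depend on $x_i$, each $\Theta_L(\cdot,s)-\Theta^k_{\mathrm{opt}}(P,s)$ is non-decreasing, and a pointwise maximum of non-decreasing functions is non-decreasing, so $A$ is non-decreasing. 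The mirror-image argument on Eq.~\ref{eq:right-evac} (distances to right vertices shrink and fewer vertices remain to the right) shows $\Theta_R([x_l,x_r],x_i,s)$ is non-increasing in $x_i$, hence $B$ is non-increasing.

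Finally I would invoke the elementary fact that the pointwise maximum of a non-decreasing function $A$ and a non-increasing function $B$ is unimodal with a unique minimum value: letting $m$ be an index minimizing $R_{lr}(\cdot)$ over $\{x_l,\dots,x_r\}$, if $R_{lr}$ failed to be non-decreasing on $\{x_i : i\ge m\}$ there would be some $i\ge m$ with $R_{lr}(x_i)>R_{lr}(x_{i+1})$, which, since $A$ is non-decreasing, forces $R_{lr}(x_i)=B(x_i)$; then $B$ non-increasing gives $R_{lr}(x_j)\ge B(x_j)\ge B(x_i)=R_{lr}(x_i)>R_{lr}(x_{i+1})$ for all $j\le i$, contradicting the minimality of $R_{lr}(x_m)$. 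The argument for non-increasing behaviour on $\{x_i : i\le m\}$ is symmetric. I do not expect a genuine obstacle here; the only steps that need care are the interchange of the two maxima in the first display and checking that ``maximum of a non-decreasing and a non-increasing function'' exactly matches the paper's definition of ``unimodal with a unique minimum value'' — everything else is monotonicity bookkeeping.
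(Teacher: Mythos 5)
Your proof is correct, but it follows a genuinely different (and in fact more careful) route than the paper's. The paper's own proof first asserts that, for each fixed scenario $s$, the per-scenario regret $R_{lr}(s,x_i)=\Theta^1(P_{lr},x_i,s)-\Theta^k_{\mathrm{opt}}(P,s)$ is unimodal in the sink, and then concludes by saying that $R_{lr}(x_i)=\max_{s\in\mathcal{S}^*}R_{lr}(s,x_i)$ is a maximum of $O(n)$ unimodal functions and hence unimodal. As a blanket principle that last inference is false: the pointwise maximum of two valley-shaped functions (e.g.\ $|x|$ and $|x-10|$) need not be valley-shaped. What makes the conclusion true here is precisely the structure you exploit: each $\Theta^1$ splits as $\max(\Theta_L,\Theta_R)$ with $\Theta_L$ non-decreasing and $\Theta_R$ non-increasing in the sink position, the two maxima can be interchanged, and the class of functions of the form $\max(\text{non-decreasing},\text{non-increasing})$ is closed under taking pointwise maxima over scenarios (since $\Theta^k_{\mathrm{opt}}(P,s)$ is a constant shift for each $s$). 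Your final step, showing that $\max(A,B)$ with $A$ non-decreasing and $B$ non-increasing is unimodal with a unique minimum value in the paper's sense, is exactly the argument needed to close this gap, and your contradiction argument there is sound. So the paper's proof is shorter but leans on an unjustified general claim, while yours makes explicit the monotone decomposition that actually carries the lemma; the only cosmetic caveat is to treat the empty maxima at the endpoints $x_i=x_l$ and $x_i=x_r$ (no vertices to one side) consistently, which does not affect the monotonicity bookkeeping.
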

\begin{proof} See Appendix \ref{app:rji-unimodal}.
\end{proof}

By using Lemmas \ref{lem:rji-right} and \ref{lem:rji-unimodal}, we can see that by apppending a node to the right of a subpath, the minimax-regret sink cannot move to the left and it is possible to locate the minimax-regret sink in the subpath by scanning linearly to right of the previous minimax-regret sink. Thus, the precomputation of $R_{ji}$'s can be rewritten to be completed in $O(n^3)$ time as follows:
\begin{enumerate}
\item Consider some $x_l$ as the left end of the subpath, move the right-end $x_r$ away from $x_l$. There are $O(n)$ possible right-ends to a left end $x_l$.
\item By Lemmas  \ref{lem:rji-right} and \ref{lem:rji-unimodal}, the minimax-regret sink $x_t$ cannot move to the left as the right-end $x_r$ increases. Therefore over all possible right-ends $x_r$, $t$(index of current minimax-regret sink) is incremented only $O(n)$ times, i.e., only an amortized $O(1)$ sink locations need to be checked for a right-end $x_r$.
\item Each candidate minimax-regret sink has $O(n)$ possible \emph{worst-case scenarios} $s_B^*$. The regret can be looked up in $O(1)$ time because of the precomputation in Sect. \ref{sec:precomp-amort-step2}. The \emph{max-regret} for the sink over all possible \emph{worst-case scenarios} $s_B^*$ can be calculated in $O(n)$ time.
\item There are $O(n)$ possible left ends $x_l$ to consider.
\item Therefore, the total running time in $O(n^3)$.
\end{enumerate}

\section{Proofs for the precomputation of $R_{ji}$'s}

\subsection{Construction of Lookup Table for the $1$-sink evacuation time for any (subpath,sink,scenario) triplet in $O(n^3)$ time}
\label{app:1-sink-subpath-precomp}
We are going to show a method by which we can construct a $O(1)$ time lookup structure for the evacuation time of any subpath with left end $x_l$ and right end $x_r$, for a sink  $x_t (l\leq t\leq r)$ within the subpath under all candidate \emph{worst-case scenarios} in $s_B^*\in \mathcal{S}^*$.

Now, according to the proof of Lemma \ref{lem:sub-scenario} given in Appendix \ref{app:sub-scenario}, the \emph{worst-case scenario} for a sink in a subpath is either \emph{left-dominant} or \emph{right-dominant} and the weights of the vertices transform from $w_i^-$ to $w_i^+$ at a node $x_m$ which maximizes the \emph{evacuation function} for that sink (Refer Fig. \ref{fig:left-dominant-scenario}). Therefore, a \emph{left-dominant} or \emph{right-dominant} scenario in a subpath from $x_l$ to $x_r$ with sink $x_t$ can be represented by a vertex $x_m$ which is the transition point of the weights from $w_i^-$ to $w_i^+$.

We are going to construct two tables:
\begin{itemize}
\item $\mathcal{L}(x_l,x_m,x_t)$ - Stores the evacuation time in subpath from $x_l$ to $x_t$ under a \emph{left-dominant} scenario with transition vertex $x_m (l\leq m < t)$.
\item $\mathcal{R}(x_t,x_m,x_r)$ - Stores the evacuation time in subpath from $x_t$ to $x_r$ under a \emph{right-dominant} scenario with transition vertex $x_m (t < m\leq r)$.
\end{itemize}

Conceptually, tables $\mathcal{L}$ and $\mathcal{R}$ store the left and right evacuation time for a sink, subpath and scenario. The evacuation time in a subpath from $x_l$ to $x_r$ for a sink $x_t (l\leq t\leq r)$ under a candidate \emph{worst-case scenario} which has transition vertex $x_m$ is:
\begin{itemize}
\item $\max\left(\mathcal{L}(x_l,x_m,x_t),\mathcal{R}(x_t,x_r,x_r)\right)$, if $x_m < x_t$ (a \emph{left-dominant} scenario), or
\item $\max\left(\mathcal{L}(x_l,x_l,x_t),\mathcal{R}(x_t,x_m,x_r)\right)$, if $x_m > x_t$ (a \emph{right-dominant} scenario).
\end{itemize}
Therefore, any (subpath,sink,scenario) query for the evacuation time in a subpath for a sink under a scenario can be looked up in $O(1)$ time if we have the tables $\mathcal{L}$ and $\mathcal{R}$.

We now show how to calculate the table $\mathcal{R}(x_t,x_m,x_r)$. Table $\mathcal{L}(x_l,x_m,x_t)$'s construction is symmetric. For a given sink $x_t$ and the right end $x_r$, $x_m (t <  m\leq r)$ has a possible $O(n)$ locations. The following procedure fills up the $\mathcal{R}$ table:

\begin{enumerate}
  \item For $t\leftarrow 0$ to $n-1$: ($O(n)$)
    \begin{enumerate}
      \item For $r\leftarrow t+1$ to $n$: ($O(n)$)
        \begin{itemize}
          \item Calculate $\mathcal{R}(x_t,x_{t+1},x_r)$ normally in $O(n)$ time.
        \end{itemize}
        \begin{enumerate}
          \item For $m\leftarrow t+2$ to $r$: ($O(n)$)
            \begin{itemize}
              \item $\mathcal{R}(x_t,x_m,x_r) = \mathcal{R}(x_t,x_{m-1},x_r) + \left|x_{m}-x_{m-1}\right| - w_{m-1}$
            \end{itemize}
        \end{enumerate}
    \end{enumerate}
\end{enumerate}

A similar procedure can be used to fill up the table $\mathcal{L}(x_l,x_m,x_t)$. The running time for construction of $\mathcal{L}$ and $\mathcal{R}$ is $O(n^3)$.

\subsection{Proof for Lemma \ref{lem:rji-right}}
\label{app:rji-right}
Let us first define a few notations we will be using in this proof:
\begin{itemize}
\item $P^O$ is the subpath with leftmost(resp. rightmost) node $x_l$(resp. $x_r$).
\item $P^N$ is the subpath with leftmost(resp. rightmost) node $x_l$(resp. $x_{r+1}$).
\item $x_t$(resp. $x_i$) is the the minimax-regret sink in subpath $P^O$(resp. $P^N$).
\item For node $x_j$ in subpath $P^O$(resp. $P^N$), $R^O_j$(resp. $R^N_j$) is the minimax-regret if $x_j$ is chosen as the sink.
\item $\Theta_L^O(s,x_j)$ (resp. $\Theta_L^N(s,x_j)$) is the left evacuation time in subpath $P^O$(resp. $P^N$) if $x_j$ is chosen as the sink under scenario $s$. Similarly, $\Theta_R^O(s,x_j)$ and $\Theta_R^N(s,x_j)$ is defined.
\item $\Theta^O(s,x_j)$ (resp. $\Theta^N(s,x_j)$) is the overall evacuation time in subpath $P^O$(resp. $P^N$) if $x_j$ is chosen as the sink under scenario $s$.
\end{itemize}

Let us assume the contrary, i.e., there exists no minimax-regret sink $x_i\in P^N$ such that $i\geq t$, i.e., $i < t$. Therefore,

\begin{equation}
\label{eq:l3:2}
R_i^N < R_t^N
\end{equation}
since the minimax-regret sink in $P^N$ is not located at $i=t$.

Since $x_t$ is the minimax-regret sink in $P^O$,
\begin{equation}
\label{eq:l3:1}
R_t^O\leq R_i^O
\end{equation}

Now, since $P^N$ is $P^O$ appended by a node, the following equation holds:
\begin{equation}
\label{eq:l3:3}
R_i^O\leq R_i^N
\end{equation}

Equations \ref{eq:l3:1},\ref{eq:l3:2} and \ref{eq:l3:3} gives us,
\begin{equation}
\label{eq:l3:4}
R_t^O < R_t^N
\end{equation}

Let $s_t^N$ be the \emph{worst-case scenario} for $x_t$ in subpath $P^N$.
\begin{claim}
\label{cl:l3:1}
$\Theta_R^N(s_t^N,x_t)\geq \Theta_L^N(s_t^N,x_t)$,i.e., in subpath $P^N$, for sink $x_t$, the right evacuation time is greater than or equal to the left evacuation time under its \emph{worst-case scenario} $s_t^N$.
\end{claim}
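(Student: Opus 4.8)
The plan is to establish Claim \ref{cl:l3:1} by contradiction. Suppose instead that $\Theta_L^N(s_t^N,x_t) > \Theta_R^N(s_t^N,x_t)$, so the left evacuation dominates in $P^N$ for sink $x_t$ under $s_t^N$, and hence $\Theta^N(s_t^N,x_t) = \Theta_L^N(s_t^N,x_t)$.

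The structural observation I would exploit is that the left evacuation time to a sink $x_t$ depends only on the positions and weights of the vertices lying strictly to the left of $x_t$; this is immediate from Eq. \ref{eq:left-evac}. Since $P^N$ is obtained from $P^O$ by appending the single vertex $x_{r+1}$ to the right, and $t \le r$, the vertices to the left of $x_t$ — and their weights under any given scenario — coincide in $P^O$ and in $P^N$. Consequently $\Theta_L^O(s_t^N,x_t) = \Theta_L^N(s_t^N,x_t)$.

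I would then chain inequalities. Because $R_t^O$ is the maximum, over all (worst-case) scenarios, of the regret of sink $x_t$ in $P^O$, it is at least the regret attained at the particular scenario $s_t^N$; and the $1$-sink evacuation time $\Theta^O(s_t^N,x_t)$ is at least its left component. Hence
\begin{align*}
R_t^O & \ge \Theta^O(s_t^N,x_t) - \Theta^k_{\mathrm{opt}}(P,s_t^N)
\ \ge\ \Theta_L^O(s_t^N,x_t) - \Theta^k_{\mathrm{opt}}(P,s_t^N)\\
& = \Theta_L^N(s_t^N,x_t) - \Theta^k_{\mathrm{opt}}(P,s_t^N)
\ =\ \Theta^N(s_t^N,x_t) - \Theta^k_{\mathrm{opt}}(P,s_t^N)
\ =\ R_t^N,
\end{align*}
where the last two equalities use the contradiction hypothesis and the fact that $s_t^N$ is a worst-case scenario for $x_t$ in $P^N$. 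This contradicts Eq. \ref{eq:l3:4}, which asserts $R_t^O < R_t^N$. Therefore the supposition fails and $\Theta_R^N(s_t^N,x_t) \ge \Theta_L^N(s_t^N,x_t)$, which is the claim.

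The argument is short; the only points needing care are (i) the invariance of the left evacuation time under appending a vertex at the far right (which is where $t \le r$ is used), and (ii) the fact that $R_t^O$, being a maximum over scenarios, dominates the regret of $x_t$ in $P^O$ evaluated at the single scenario $s_t^N$. I expect (i) to be the (mild) crux of the write-up.
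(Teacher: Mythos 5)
Your proof is correct and follows essentially the same route as the paper: assume the left evacuation time dominates, use the fact that $\Theta_L^N(s_t^N,x_t)=\Theta_L^O(s_t^N,x_t)$ because the appended vertex $x_{r+1}$ lies to the right of $x_t$, and conclude $R_t^N\leq R_t^O$, contradicting Eq.~\ref{eq:l3:4}. The only difference is cosmetic—you chain the inequalities starting from $R_t^O$ rather than from $R_t^N$—and your explicit remarks (i) and (ii) are exactly the two implicit steps in the paper's version.
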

\begin{proof}
Assume the contrary, i.e., $\Theta_R^N(s_t^N,x_t) < \Theta_L^N(s_t^N,x_t)$. Now,
\begin{eqnarray*}
R_t^N & = & \Theta_L^N(s_t^N,x_t) - \Theta_{opt}(s_t^N)\\
& = & \Theta_L^O(s_t^N,x_t) - \Theta_{opt}(s_t^N)\\
& \leq & R_t^O
\end{eqnarray*}
This contradicts Equation \ref{eq:l3:4}. Our assumption is false and therefore, $\Theta_R^N(s_t^N,x_t)\geq \Theta_L^N(s_t^N,x_t)$.\qed
\end{proof}

Since $i < t$, by Claim \ref{cl:l3:1},
\begin{eqnarray*}
\Theta^N(s_t^N,x_t) & \leq & \Theta^N(s_t^N,x_i)\\
\implies R_t^N & \leq & R_i^N
\end{eqnarray*}
This contradicts Equation \ref{eq:l3:2}. Therefore our assumption that there exists no minimax-regret sink $x_i$ ($i\geq k$) is false. Thus there exists a minimax-regret sink $x_i$ in $P^N$ such that $i\geq t$.\qed

\subsection{Proof for Lemma \ref{lem:rji-unimodal}}
\label{app:rji-unimodal}
Consider $P_{lr}$ to be the subpath from node $x_l$ to $x_r$. For any given scenario $s\in\mathcal{S}$, the evacuation time in subpath $P_{lr}$ will be unimodal with unique minimum value as a function of the sink $x_i$, i.e.,$\Theta^1(P_{lr},x_i,s)$ is unimodal.

By definition (in Eq. \ref{eq:regret1}), $$R_{lr}(s,x_i) = \Theta^1(P_{lr},x_i,s)-\Theta_{\mathrm{opt}}^k(P,s).$$
Therefore, $R_{lr}(s,x_i)$ is also unimodal with a unique minimum value as a function of $x_i$. Also the \emph{max-regret} can be defined as (by Eq. \ref{eq:max-regret1}),
$$R_{lr}(x_i) = \max_{s\in\mathcal{S}^*}R_{lr}(s,x_i)$$

$R_{lr}(x_i)$ is the maximum of $O(n)$ unimodal functions with unique minimum values. Therefore, $R_{lr}(x_i)$ is also unimodal function with a unique minimum value as a function of $x_i (l\leq i\leq r)$.\qed

\end{appendix}

\end{document}